
\documentclass[letterpaper, 10 pt, conference]{ieeeconf}  

\IEEEoverridecommandlockouts                              

\overrideIEEEmargins                                      



\usepackage{graphics} 
\usepackage{epsfig} 
\usepackage{mathptmx} 
\usepackage{times} 
\usepackage{amsmath} 
\usepackage{amssymb}  
\usepackage{algorithm}
\usepackage{algpseudocode}
\usepackage{mathtools}
\usepackage{bbm}
\usepackage{tikz}
\usepackage{layouts}
\usepackage{pgfplots} 
\usepackage{tikzscale}
\usepackage{subcaption}
\usepackage{graphicx}
\usepackage{centernot}

\newtheorem{prop}{Proposition}
\newtheorem{theo}{Theorem}
\newtheorem{ass}{Assumption} 
\newtheorem{rem}{Remark}
\newtheorem{defi}{Definition}

\title{\LARGE \bf
Stability Mechanisms for Predictive Safety Filters
}

\author{Elias Milios$^{1}$, Kim Peter Wabersich$^{1}$, Felix Berkel$^{1}$, and Lukas Schwenkel$^{2}$
\thanks{$^{1}$Elias Milios, Kim Peter Wabersich, and Felix Berkel are with the Corporate Research of Robert Bosch GmbH, 71272 Renningen, Germany.
        \{Elias.Milios, KimPeter.Wabersich, Felix.Berkel\}@de.bosch.com}%
\thanks{$^{2}$Lukas Schwenkel is with the Institute for Systems Theory and Automatic Control, University of Stuttgart, 70550 Stuttgart, Germany.
        Lukas.Schwenkel@ist.uni-stuttgart.de}%
}

\begin{document}

\maketitle
\thispagestyle{empty}
\pagestyle{empty}

\begin{abstract}
    Predictive safety filters enable the integration of potentially unsafe learning-based control approaches and humans into safety-critical systems. In addition to simple constraint satisfaction, many control problems involve additional stability requirements that may vary depending on the specific use case or environmental context. In this work, we address this problem by augmenting predictive safety filters with stability guarantees, ranging from bounded convergence to uniform asymptotic stability. The proposed framework extends well-known stability results from model predictive control (MPC) theory while supporting commonly used design techniques. As a result, straightforward extensions to dynamic trajectory tracking problems can be easily adapted, as outlined in this article. The practicality of the framework is demonstrated using an automotive advanced driver assistance scenario, involving a reference trajectory stabilization problem.
\end{abstract}

\section{INTRODUCTION}
Advances in the field of learning-based control and the increasing demand for human-machine interaction are driving the need for modular safety certificates in control systems.
Prominent domains include automated driving \cite{vehicle_automation}, smart factories \cite{human_robot_collab}, or surgical robotics \cite{surg_robotics}. Predictive safety filter methods address this challenge by ensuring constraint satisfaction using MPC techniques. While safety in the form of constraint satisfaction is often a basic specification, many modern control problems still require classical stability properties, e.g., to reduce stress on actuators. These stability properties can vary depending on the underlying system, use case, and environment. For example, during the development of an advanced driver assistance system, different stability characteristics are desired for different phases. In a data collection process for system identification or controller tuning, safety in terms of constraint satisfaction may be sufficient, see Fig.~\ref{fig:stability properties}, left. However, in an automated driver assistance task, such as lane keeping or automotive cruise control, additional requirements like convergence or asymptotic stability with respect to a reference provided, e.g., computed by a higher-level planner, are desirable, see Fig.~\ref{fig:stability properties}, center and right.
   \begin{figure*}[t] 
      \centering
        \includegraphics[scale=0.65]{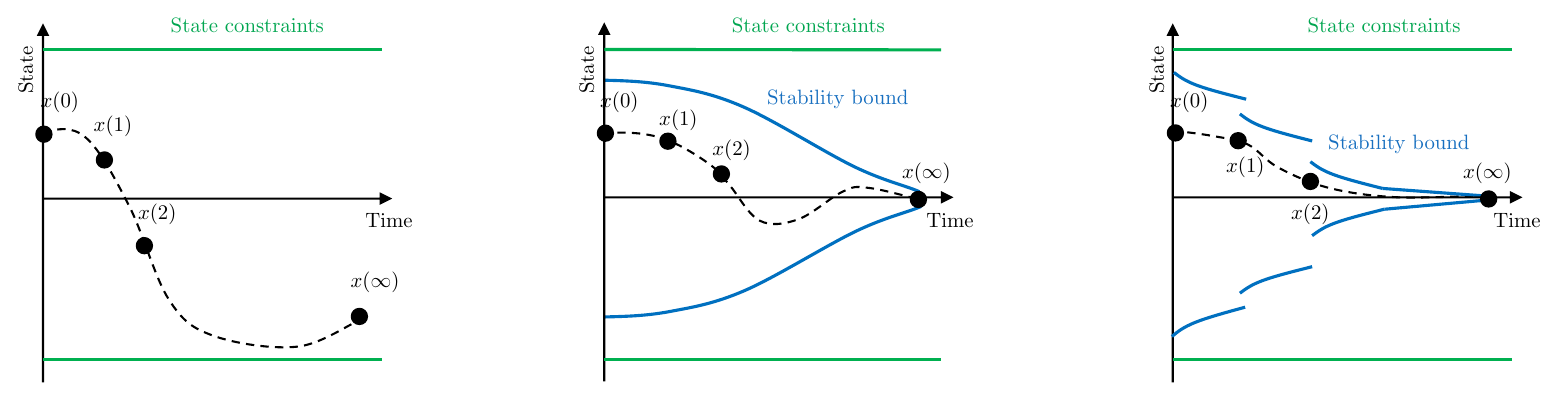}
        \caption{Illustration of the stability mechanisms considered in this work. \textbf{Left:} Safety in terms of constraint satisfaction. \textbf{Center:} Safe stability in terms of a bounded and converging closed-loop state evolution, which satisfies the constraint. \textbf{Right:} Safe stability in terms of an uniformly asymptotically stable closed-loop state evolution, which satisfies the constraints.}
        \label{fig:stability properties}
   \end{figure*}
In this work, we provide a modular safety filter layer that ensures constraint satisfaction and stability specifications. This enables the integration of learning-based controllers and humans into safety-critical systems.

\textit{Related work:}  
The concept of control barrier functions (CBF) provides combined constraint satisfaction and asymptotic stability in the sense of Lyapunov~\cite{CBF_Agrawal}. The main challenge in using these techniques is the explicit design of a control barrier and a control Lyapunov function (CLF), which is valid on a possibly large domain satisfying all safety constraints. This design task is especially difficult for nonlinear systems with higher dimensional state and input spaces. In addition, stability with respect to reference trajectories further complicates the underlying design process, which can be handled naturally in an MPC framework.

In \cite{Soloperto}, a generic framework for changing the objective of an existing MPC scheme to a user-defined learning objective is presented. By introducing a suitable constraint, a performance bound with respect to the original objective is provided. The work in \cite{didier2024predictive} presents an extension of predictive safety filters that ensures Lyapunov stability guarantees with respect to an equilibrium point. This is achieved by an extended state that evolves with respect to a set-valued mapping, as done in suboptimal MPC~\cite{allan,feas_impl_stab}. The design techniques in \cite{didier2024predictive} differ from most MPC literature, complicating straightforward extensions to, e.g., tracking problems.
\par
Our method follows a similar concept as \cite{Soloperto} and \cite{didier2024predictive} with additional notions of stability and corresponding theoretical results, tailored to practical challenges.

\textit{Contribution:} 
We propose a stability-enhanced predictive safety filter that can be combined with arbitrary controllers and is applicable to nonlinear systems (Section~\ref{sec:MPSF}). 
In particular, we provide a \textit{rigorous analysis} that comes with \textit{formal stability guarantees} concerning \textit{different stability specifications} (Section~\ref{sec:theoretical_results}).
The formulation is easily adaptable to a variety of safety-critical control problems, resulting in a \textit{modular framework}. We outline the application to robust and dynamic tracking settings (Section~\ref{sec:Extensions}) and provide a \textit{closed-loop performance bound}.
The design and implementation of the scheme are illustrated using an advanced driver assistance system example, requiring safety and stability with respect to a dynamic reference (Section~\ref{sec:num_example}).

\textit{Notation:}
The quadratic norm with respect to a positive definite matrix $Q \succ 0$ is denoted by $\lVert x \rVert_{Q}^2 = x^\top Q x$. The symbol $\mathbb{I}_{\geq 0}$ denotes the set of nonnegative integers, and the symbol $\mathbb{I}_{[0,N]}$ denotes the set $\{0,1,\ldots,N\}$. For two vectors $a \in \mathbb{R}^n$ and $b \in \mathbb{R}^m$, the ordered pair $(a,b) \in \mathbb{R}^{n+m}$ represents their concatenation. The interior of a set $\mathbb{X}$ is denoted by $\mathrm{int}(\mathbb{X})$. A function $\alpha: \mathbb{R}_{\geq 0} \rightarrow \mathbb{R}_{\geq 0}$ is called a $\mathcal{K}_\infty$ function if it is continuous, strictly increasing, and it holds that $\alpha(0) = 0$ and $\alpha(s)\rightarrow \infty$ as $s \rightarrow \infty$. 

\section{Preliminaries and Problem Statement}
We consider nonlinear discrete-time systems of the form
\begin{equation}\label{eq:nonlinear_system}
    x(k+1) = f(x(k),u(k)),
\end{equation}
with state $x(k) \in \mathbb{R}^n$, control input $u(k) \in \mathbb{R}^m$, and time step $k \in \mathbb{I}_{\geq0}$. We assume that the dynamics $f:\mathbb{R}^n \times \mathbb{R}^m \rightarrow \mathbb{R}^n$ are continuous and that the origin of system \eqref{eq:nonlinear_system} is an equilibrium point, i.e., $0=f(0,0)$. The system is subject to state and input constraints $(x(k),u(k)) \in \mathbb{X} \times \mathbb{U}$, with $\mathbb{X} \subseteq \mathbb{R}^n$ and $\mathbb{U} \subset \mathbb{R}^m$, $(0,0)\in \mathbb{X} \times \mathbb{U}$, and $\mathbb{U}$ compact. 
The overall goal is to apply a potentially unsafe and non-stabilizing desired input $u_\mathrm{des}(k)\in \mathbb{R}^m$ to~\eqref{eq:nonlinear_system}, generated, e.g., by a learning algorithm or a human interacting with the system. To ensure state and input constraint satisfaction at different levels of stability, we aim at deriving a filtering mechanism that monitors $u_\mathrm{des}(k)$ and modifies it if necessary.
To this end, we recap an existing model predictive safety filter formulation for ensuring constraint satisfaction~\cite{Linear_SafetyFilter} and extend it to provide stability guarantees in a second step.
\par
Given a state $x(k)$ and a desired input $u_\mathrm{des}(k)$, we consider a predictive safety filter that is based on solving a receding horizon predictive control problem of the form:
\begin{subequations}\label{eq:saftey_filter}
\begin{align}
    \min_{u_{\cdot|k}} \ & G(x_{\cdot|k},u_{\cdot|k},u_{\mathrm{des}}(k))\label{eq:safe_filter_cost}\\
    \text{s.t. } \  & x_{0|k} = x(k) \label{eq:safety_filter_init_constr}\\
    & x_{i+1|k} = f(x_{i|k},u_{i|k}), \  \forall i \in \mathbb{I}_{[0,N-1]} \label{eq:safety_filter_dynamics}\\
    & x_{i|k} \in \mathbb{X}, \  \forall i \in \mathbb{I}_{[0,N-1]} \\
    & u_{i|k} \in \mathbb{U}, \ \forall i \in \mathbb{I}_{[0,N-1]} \label{eq:safety_filter_input_constr}\\
    & x_{N|k} \in \mathbb{X}_\mathrm{f}\label{eq:safety_filter_term_constr}.
\end{align}
\end{subequations}
Here, $x_{i|k}$ and $u_{i|k}$ denote the $i$-th predicted state and input at time $k$ and $x_{\cdot|k}=\{x_{i|k}\}_{i=0}^N$ denotes a predicted open-loop state evolution of system \eqref{eq:nonlinear_system} at time $k$, given $x_{0|k}=x(k)$ and a predicted input sequence $u_{\cdot|k} = \{u_{i|k}\}_{i=0}^{N-1}$. The last predicted state $x_{N|k}$ is constrained to the terminal set $\mathbb{X}_\mathrm{f}$, with $N \in \mathbb{I}_{>0}$ denoting the prediction horizon. 
Given $x(k)$, we define the set of \textit{feasible input sequences} $u_{\cdot|k}$ for problem \eqref{eq:saftey_filter} by
\begin{equation}
    \mathcal{U}(x(k)):=\{u_{\cdot|k} \mid \eqref{eq:safety_filter_init_constr} - \eqref{eq:safety_filter_term_constr} \},
\end{equation}
and the set of \textit{feasible} states $x(k)$ for problem \eqref{eq:saftey_filter} by
\begin{equation}
    \mathcal{X}:= \{x(k) \mid \exists u_{\cdot|k} \in \mathcal{U}(x(k)) \}.
\end{equation}

Safety of a desired input $u_\mathrm{des}(k)$ is ensured by solving~\eqref{eq:saftey_filter} and applying a possibly modified input which allows to ensure existence of a \textit{safe backup trajectory} towards $\mathbb{X}_{\mathrm{f}}$. By designing $\mathbb{X}_\mathrm{f}$ to be forward invariant under some safe controller (compare with \cite{Linear_SafetyFilter}), the computed safe backup implies recursive feasibility of \eqref{eq:saftey_filter} and with this constraint satisfaction of the operated systems for all times. Selecting $G~:~\mathbb{R}^{Nn} \times \mathbb{R}^{Nm} \times \mathbb{R}^m \rightarrow \mathbb{R}$, e.g., as
$G(x_{\cdot|k},u_{\cdot|k},u_{\mathrm{des}}):= \lVert u_\mathrm{des}(k) - u_{0|k}\rVert^2$ yields the desired filtering property.

While the structure of problem \eqref{eq:saftey_filter} ensures safety in terms of constraint satisfaction, it does not provide any additional stability properties. In the following, we define two different `levels' of stability, characterizing the desired stability enhancements to \eqref{eq:saftey_filter}, which are presented in the following sections.

  \begin{defi}\label{defi:stability}
     Let $x^* = 0$ be an equilibrium point for system \eqref{eq:nonlinear_system} with time-varying feedback law $u(k) = \mu(x(k),k)$, where $\mu:  \mathbb{R}^n \times \mathbb{I}_{\geq 0} \rightarrow \mathbb{R}^m$. We call the resulting closed-loop system:
     \par 
     \textit{(1) Bounded and converging} to $x^*$ if, for each $\varepsilon > 0$, there exists a $\delta = \delta(\varepsilon)>0$ such that $\lVert x(0) \rVert < \delta \Rightarrow \lVert x(k) \rVert < \varepsilon$ for all $k \in \mathbb{I}_{\geq 0}$, and $ x(k) \rightarrow 0$ as $k \rightarrow \infty$.
    \par 
     \textit{(2) Uniformly asymptotically stable} with respect to $x^*$ if, for each $\varepsilon > 0$, there exists a $\delta = \delta(\varepsilon)>0$ such that $\lVert x(k_0) \rVert < \delta \Rightarrow \lVert x(k) \rVert < \varepsilon$ for all $k_0,k \in \mathbb{I}_{\geq 0}$ with $k \geq k_0$, and $ x(k) \rightarrow 0$ as $k \rightarrow \infty$, compare with \cite[Definition 4.2]{Elaydi2005}.
 \end{defi}

An illustration of the implications of Definition~\ref{defi:stability} on the closed-loop behavior is provided in Fig. \ref{fig:stability properties}.
For bounded convergence (Fig.~\ref{fig:stability properties}, center), the closed-loop state evolution must be bounded with respect to the initial state $x(k_0)$ with initial time $k_0=0$ and converge to $x^*$ as time goes to infinity. This implies stronger guarantees compared to plain constraint satisfaction while maintaining a certain degree of freedom for the system evolution. This can be desired, e.g., for system identification or learning-based settings with the goal of gathering real-world data with multiple rollouts. In such settings, properties in the sense of bounded convergence can be desired to ensure that the system stays inside a certain area of the state space and eventually converges to a desired equilibrium point in order to start a new rollout.
In contrast, for uniform (and also non-uniform) asymptotic stability (Fig.~\ref{fig:stability properties}, right), the closed-loop state evolution must be bounded with respect to any time $k_0 \in \mathbb{I}_{\geq 0}$ and any initial state $x(k_0)$ and converge to $x^*$ as time goes to infinity. This definition implies stronger requirements on the closed-loop state evolution, e.g. requiring the system to stay in $x^*$ once it is reached. 
We introduce the main concepts used to extend \eqref{eq:saftey_filter} with stability properties according to Definition~\ref{defi:stability} in Section \ref{sec:MPSF}. The underlying theory for stability in terms of bounded convergence and uniform asymptotic stability is derived in Section~\ref{sec:convergence} and \ref{sec:stability}, respectively. In Section~\ref{sec:Extensions}, we extend the framework from stabilization of the origin to stabilization of dynamic references, and apply the framework to an advanced driver assistance scenario in Section \ref{sec:num_example}.

\section{Stabilizing Predictive Safety Filters}\label{sec:MPSF}
To enhance \eqref{eq:saftey_filter} such that stability properties can be guaranteed on top of safety, the main idea is to constrain a \textit{stability cost} of the form:
\begin{equation}\label{eq:stab_cost}
    J(x(k),u_{\cdot|k}):=\sum_{i=0}^{N-1} \ell (x_{i|k},u_{i|k}) + V_\mathrm{f}(x_{N|k}),
\end{equation} 
 with a \textit{stability bound} $J_\mathrm{B}: \mathbb{I}_{\geq 0} \rightarrow \mathbb{R}_{\geq 0}$ to additionally enforce $J(x(k),u_{\cdot|k})\leq J_{\mathrm{B}}(k)$ in \eqref{eq:saftey_filter}. This ensures that safe solutions $u_{\cdot|k} \in \mathcal{U}(x(k))$ provide a stability cost that is not larger than the current stability bound. The stability cost is defined similarly to a standard stabilizing MPC cost function by combining a stage cost $\ell:\mathbb{R}^n\times\mathbb{R}^m \rightarrow \mathbb{R}_{\geq 0}$ with an appropriately chosen terminal cost $V_\mathrm{f}:\mathbb{R}^n \rightarrow \mathbb{R}_{\geq 0}$ compensating for the neglected infinite horizon tail of the stage cost \cite{MPC_rawlings}.     
\begin{ass}\label{ass:cont_stagecost}
 The stage and terminal cost $\ell$ and $V_\mathrm{f}$ are continuous satisfying $\ell(0,0) = V_\mathrm{f}(0) = 0$ and there exists an $\alpha_\ell\in \mathcal{K}_{\infty}$ such that $\ell(x,u) \geq \alpha_\ell(\lVert x \rVert)$ for all $(x,u) \in \mathbb{X} \times \mathbb{U}$.
\end{ass}

 Based on this, the main idea is to shape the desired closed-loop behavior through suitable design choices of $J(\cdot)$ and $J_\mathrm{B}(k)$. Given a state $x(k)$ and desired input $u_\mathrm{des}(k)$ at time $k$ define the \textit{stability-enhanced safety filter problem} $\mathbb{P}(k)$ by
\begin{subequations}\label{eq:stability_filter}
\begin{align}
    \min_{u_{\cdot|k}} \ & \eqref{eq:safe_filter_cost}\\
    \text{s.t. } \  & \eqref{eq:safety_filter_init_constr} - \eqref{eq:safety_filter_term_constr} \label{eq:stab_filter_safety_filter_parts}\\
    & J(x(k),u_{\cdot|k}) \leq J_\mathrm{B}(k) \label{eq:stability_constraint},
\end{align}
\end{subequations}
where we include the \textit{stability constraint} \eqref{eq:stability_constraint} compared to~\eqref{eq:saftey_filter}.
 We denote the set of \textit{feasible input sequences} $u_{\cdot|k}$, given $x(k)$ and $J_\mathrm{B}(k)$, by
\begin{equation}
    \mathcal{U}_\mathbb{P}(x(k),J_\mathrm{B}(k)):= \{ u_{\cdot|k} \mid \eqref{eq:stab_filter_safety_filter_parts}-\eqref{eq:stability_constraint} \},
\end{equation}
and the set of \textit{feasible states} $x(k)$ by
\begin{equation}
    \mathcal{X}_{\mathbb{P}}(J_\mathrm{B}(k)):= \{x(k)  \mid  \exists u_{\cdot|k} \in \mathcal{U}_\mathbb{P}(x(k),J_\mathrm{B}(k)) \}.
\end{equation}
With this, for all $k \in \mathbb{I}_{\geq 0}$, $x(k)$, and $J_\mathrm{B}(k)$, we have that $\mathcal{X}_{\mathbb{P}}(J_\mathrm{B}(k)) \subseteq \mathcal{X}$ and $\mathcal{U}_\mathbb{P}(x(k),J_\mathrm{B}(k)) \subseteq \mathcal{U}(x(k))$. The solution to \eqref{eq:stability_filter}, given $x(k)$, $u_\mathrm{des}(k)$, and $J_\mathrm{B}(k)$, is the optimal input sequence $u^*_{\cdot|k}$ and the so-called \textit{closed-loop stability cost} defined by
\begin{equation}
    V(x(k),k)=J(x(k),u^*_{\cdot|k}).
\end{equation}
The resulting input applied to system \eqref{eq:nonlinear_system} is defined by the first element of the optimal solution, i.e., $u(k):=u^*_{0|k}$.

Based on this, the overall resulting scheme is presented in Algorithm~\ref{algo:nom_stabfilter_nom}.
\begin{algorithm}[t]
    \caption{Model Predictive Stability Filter}\label{algo:nom_stabfilter_nom}
    \begin{algorithmic}[1]
    \Statex \textbf{Input}: $\mathbb{P}$, $N$, $J_\mathrm{B}(0)$. \textbf{Output}: $u(k) = \mu(x(k),k)$.
    \State Set $k:=0$.
    \State Measure current state $x(k)$, obtain desired input $u_\mathrm{des}(k)$.\label{algo:step_repeat2}
    \State Solve $\mathbb{P}$ with $x(k)$, $u_\mathrm{des}(k)$, and $J_\mathrm{B}(k)$ and obtain $u^*_{\cdot|k}$.
    \State Apply $u(k)=\mu(x(k),k)=u^*_{0|k}$.
    \State Construct $J_\mathrm{B}(k+1)$ compliant with Assumption \ref{ass:evol_stab_bound}.
    \State Set $k:=k+1$ and go back to \ref{algo:step_repeat2}.
    \end{algorithmic}
\end{algorithm}
Consider a current state $x(k)$, a given stability bound $J_\mathrm{B}(k)$, a desired input $u_\mathrm{des}(k)$. To analyze the influence of $u_\mathrm{des}(k)$ on the safety and stability of system~\eqref{eq:nonlinear_system}, we test whether $u_\mathrm{des}(k)$ evolves the system to a state $x(k+1)$, for which we can construct a stabilizing input sequence $u_{\cdot|k} \in \mathcal{U}_\mathbb{P}(x(k),J_\mathrm{B}(k))$. If the test is successful, $u_\mathrm{des}(k)$ is certified as \textit{safe and stabilizing} and is applied to the system. Otherwise, $u_{\mathrm{des}}(k)$ is minimally modified with respect to \eqref{eq:safe_filter_cost} in order to satisfy all constraints. At the subsequent time step, we repeat the computation given a new desired input $u_\mathrm{des}(k+1)$ and a new stability bound $J_\mathrm{B}(k+1)$. To ensure feasibility of problem \eqref{eq:stability_filter} at time $k+1$ as well as a closed-loop state evolution compliant with the desired stability specification, we require the successor stability bound $J_\mathrm{B}(k+1)$ to satisfy the following Assumption.
\begin{ass}\label{ass:evol_stab_bound} Consider $\zeta_{\mathrm{min}}>0$ and let $\rho \in [\zeta_{\mathrm{min}},1]$. We assume that, for all $k \in \mathbb{I}_{\geq 0}$, $x(k) \in \mathcal{X}_\mathbb{P}(J_\mathrm{B}(k))$, and $u^*_{\cdot|k} \in \mathcal{U}_\mathbb{P}(x(k),J_\mathrm{B}(k))$, there exists a candidate input sequence $u^\mathrm{c}_{\cdot|k+1} \in \mathcal{U}(x(k+1))$ and $\zeta(k+1) \in [\zeta_\mathrm{min},\rho]$ such that
\begin{multline}\label{eq:evolu_stab_bound}
J(x(k+1),u^\mathrm{c}_{\cdot|k+1}) \leq V(x(k),k)  -\rho \ell(x(k),u^*_{0|k}) \\ \leq J_\mathrm{B}(k+1) \leq V(x(k),k) -\zeta(k+1) \ell(x(k),u^*_{0|k}).
\end{multline}
\end{ass}

With Assumption \ref{ass:evol_stab_bound}, we constrain the successor stability bound $J_\mathrm{B}(k+1)$ from both sides. On the one hand, Assumption~\ref{ass:evol_stab_bound} requires $J_\mathrm{B}(k+1)$ to enforce a minimal decrease of the closed-loop stability cost, i.e., to enforce $V(x(k+1),k+1)-V(x(k),k) \leq 0$, necessary to establish convergence of the closed-loop system. The rate at which $V(x(k),k)$ is enforced to decrease is determined by $\zeta(k+1)$, which is limited from below by the constant $\zeta_{\mathrm{min}}$ and from above by the constant $\rho$. On the other hand, we assume a lower bound on $J_\mathrm{B}(k+1)$ to prevent a rapid decrease of $J_\mathrm{B}(k+1)$, which renders problem \eqref{eq:stability_filter} infeasible. This lower bound is given by the stability cost resulting from a feasible candidate solution for time $k+1$, which provides a decrease of the closed-loop stability cost with decay factor $\rho$.

While Assumption \ref{ass:evol_stab_bound} implies feasibility of problem~\eqref{eq:stability_filter} and a monotonic decrease of $V(x(k),k)$, further requirements on the ingredients of \eqref{eq:stability_filter}, namely $\ell$, $V_\mathrm{f}$, $\mathbb{X}_\mathrm{f}$, and $J_\mathrm{B}$, are needed to align the closed-loop behavior with the desired stability specifications from Definition \ref{defi:stability}. 

\section{Theoretical Results and Design}\label{sec:theoretical_results}
This section introduces additional technical assumptions and provides design choices for the stability filter problem~\eqref{eq:stability_filter} so that constraint satisfaction, recursive feasibility, and the desired stability specification can be guaranteed. We first establish bounded convergence according to Definition \ref{defi:stability},~(1), followed by possible design choices to satisfy the underlying assumptions based on established MPC techniques. These assumptions and results are strengthened to meet stability guarantees according to Definition \ref{defi:stability},~(2), where possible design choices are derived from suboptimal MPC literature.

\subsection{Bounded convergence}\label{sec:convergence}
In order to establish the stability specification according to Definition \ref{defi:stability},~(1), we need additional assumptions on the initial stability bound $J_\mathrm{B}(0)$ and on the stability cost terms. We define a lower bound using the optimal value function of a \textit{corresponding} MPC problem, which is given by
\begin{equation}
    V^\mathrm{MPC}(x(k))~:~=~\min_{u_{\cdot|k} \in \mathcal{U}(x(k))} J(x(k),u_{\cdot|k}).
\end{equation}
\begin{ass}\label{ass:init_stab_bound}
 There exists an $\alpha_2\in \mathcal{K}_\infty$ such that for any $x(0) \in \mathcal{X}$, the initial stability bound $J_\mathrm{B}(0)$ satisfies
 \begin{equation}
     V^\mathrm{MPC}(x(0)) \leq J_\mathrm{B}(0) \leq \alpha_2(\lVert x(0) \rVert).
 \end{equation}
\end{ass}

The lower bound ensures that a feasible solution to \eqref{eq:stability_filter} exists for any given $x(0) \in \mathcal{X}$. The upper bound is required to establish stability according to Definition~\ref{defi:stability},~(1) as follows.
\begin{theo}\label{theo:general_stab_filter}
  Let Assumptions \ref{ass:cont_stagecost}~-~\ref{ass:init_stab_bound} hold and suppose that $\mathcal{X}$ contains a neighborhood of the origin. Then application of Algorithm \ref{algo:nom_stabfilter_nom} with $\mathbb{P} := \mathbb{P}(k)$ yields bounded convergence according to Definition \ref{defi:stability},~(1) with respect to $x^*=0$ for any $x(0) \in \mathcal{X}$ and all $u_\mathrm{des}(k) \in \mathbb{R}^m$ with $k\in \mathbb{I}_{\geq 0}$.
\end{theo}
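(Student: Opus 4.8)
The plan is to treat the closed-loop stability cost $V(x(k),k)$ as a time-varying Lyapunov function and run a standard MPC-style descent argument, with the twist that $V$ is never directly minimized in $\mathbb{P}(k)$ but is only constrained from above by $J_\mathrm{B}(k)$. First I would establish recursive feasibility by induction. For the base case, the lower bound $V^\mathrm{MPC}(x(0)) \leq J_\mathrm{B}(0)$ from Assumption~\ref{ass:init_stab_bound} guarantees that a minimizer of $J$ over $\mathcal{U}(x(0))$ already satisfies the stability constraint~\eqref{eq:stability_constraint}, so $\mathbb{P}(0)$ is feasible. For the inductive step, Assumption~\ref{ass:evol_stab_bound} supplies a candidate $u^\mathrm{c}_{\cdot|k+1} \in \mathcal{U}(x(k+1))$ with $J(x(k+1),u^\mathrm{c}_{\cdot|k+1}) \leq J_\mathrm{B}(k+1)$, hence $u^\mathrm{c}_{\cdot|k+1} \in \mathcal{U}_\mathbb{P}(x(k+1),J_\mathrm{B}(k+1))$ and $\mathbb{P}(k+1)$ is feasible.

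Next I would derive a two-sided sandwich on $V$. Because the optimizer of $\mathbb{P}(k+1)$ is feasible it obeys the stability constraint, giving $V(x(k+1),k+1) \leq J_\mathrm{B}(k+1)$; combining this with the right-hand inequality of~\eqref{eq:evolu_stab_bound} and using $\zeta(k+1) \geq \zeta_\mathrm{min}$ together with $\ell(x(k),u^*_{0|k}) \geq \alpha_\ell(\lVert x(k) \rVert)$ from Assumption~\ref{ass:cont_stagecost} yields the descent
\begin{equation*}
V(x(k+1),k+1) - V(x(k),k) \leq -\zeta_\mathrm{min}\,\alpha_\ell(\lVert x(k) \rVert) \leq 0 .
\end{equation*}
The lower bound $V(x(k),k) \geq \ell(x_{0|k},u^*_{0|k}) \geq \alpha_\ell(\lVert x(k) \rVert)$ follows from nonnegativity of the remaining stage and terminal costs, and the initial upper bound $V(x(0),0) \leq J_\mathrm{B}(0) \leq \alpha_2(\lVert x(0) \rVert)$ follows from feasibility at $k=0$ and Assumption~\ref{ass:init_stab_bound}.

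For the boundedness part of Definition~\ref{defi:stability},~(1), I would chain monotonicity with the sandwich: for all $k$, $\alpha_\ell(\lVert x(k) \rVert) \leq V(x(k),k) \leq V(x(0),0) \leq \alpha_2(\lVert x(0) \rVert)$, so $\lVert x(k) \rVert \leq \alpha_\ell^{-1}(\alpha_2(\lVert x(0) \rVert))$. Given $\varepsilon>0$, choosing $\delta(\varepsilon) = \alpha_2^{-1}(\alpha_\ell(\varepsilon))$ (shrunk if necessary so the ball of radius $\delta$ lies in $\mathcal{X}$, which is possible since $\mathcal{X}$ contains a neighborhood of the origin) gives $\lVert x(0) \rVert < \delta \Rightarrow \lVert x(k) \rVert < \varepsilon$ for all $k$. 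For convergence, I would telescope the descent inequality to obtain $\zeta_\mathrm{min}\sum_{k=0}^{\infty}\alpha_\ell(\lVert x(k) \rVert) \leq V(x(0),0) < \infty$; since each summand is a term of a convergent series it tends to zero, and $\alpha_\ell \in \mathcal{K}_\infty$ then forces $\lVert x(k) \rVert \to 0$.

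The main obstacle, and the point requiring the most care, is the interplay in~\eqref{eq:evolu_stab_bound} between the candidate-based \emph{lower} bound on $J_\mathrm{B}(k+1)$ (which secures recursive feasibility) and the \emph{upper} bound on $J_\mathrm{B}(k+1)$ (which secures the strict decrease): both must be used, and the decrease must be transferred from $J_\mathrm{B}$ to the actual value $V(x(k+1),k+1)$ via optimality. I would also stress \emph{why} only bounded convergence, rather than uniform asymptotic stability, results here: the upper bound $\alpha_2(\lVert x(0) \rVert)$ on the Lyapunov function is available only at the initial time $k_0=0$, so the $\delta$--$\varepsilon$ estimate cannot be made uniform in the initial time. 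Closing precisely this gap is what Section~\ref{sec:stability} must address.
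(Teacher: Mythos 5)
Your proof is correct and follows essentially the same route as the paper's: recursive feasibility by induction (base case from Assumption~\ref{ass:init_stab_bound}, inductive step from the candidate in Assumption~\ref{ass:evol_stab_bound}), then a Lyapunov-type sandwich and descent on $V(x(k),k)$ obtained by transferring the decrease from $J_\mathrm{B}(k+1)$ to $V(x(k+1),k+1)$ via feasibility of the optimizer, followed by the same $\varepsilon$--$\delta$ boundedness argument exploiting that $\mathcal{X}$ contains a neighborhood of the origin. The only cosmetic deviations are that you use explicit inverses $\alpha_2^{-1},\alpha_\ell^{-1}$ where the paper argues existence of a suitable $\delta_1$, and you conclude convergence by telescoping the descent into a summability bound where the paper uses monotone convergence of $V$ and $\Delta V(k)\to 0$; both are standard, equivalent finishes of the same argument.
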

\begin{proof}
    In the following, we first prove that \eqref{eq:stability_filter} is recursively feasible following initial feasibility at time $k=0$. Following this, we conclude bounded convergence.
    
    \textit{Recursive feasibility:}
    We prove recursive feasibility by induction. For the induction start, by Assumption \ref{ass:init_stab_bound}, for any $x(0) \in \mathcal{X}$, we can find $u_{\cdot|0} \in \mathcal{U}_{\mathbb{P}}(x(0),J_\mathrm{B}(0))$, which implies that $x(0) \in \mathcal{X}_{\mathbb{P}}(J_\mathrm{B}(0))$, and hence feasibility of \eqref{eq:stability_filter} at time $k=0$. For the induction step, assume feasibility at some time $k \in \mathbb{I}_{\geq 0}$. Feasibility at time $k+1$ follows from Assumption \ref{ass:evol_stab_bound}, ensuring existence of a feasible candidate solution $u^\mathrm{c}_{\cdot|k+1} \in \mathcal{U}_{\mathbb{P}}(x(k+1),J_\mathrm{B}(k+1))$ for the successor time step.
    
    \textit{Bounded convergence:}
    For stability in terms of bounded convergence, we use Lyapunov-like arguments in the following. With Assumptions \ref{ass:evol_stab_bound}- \ref{ass:init_stab_bound} and recursive feasibility, for any $x(0) \in \mathcal{X}$ and all $k \in \mathbb{I}_{\geq 0}$ we have that  
  \begin{equation}\label{eq:upper_bound}
      V(x(k),k) \leq V(x(0),0) \leq J_\mathrm{B}(0) \leq \alpha_2(\lVert x(0) \rVert).
  \end{equation}
    By Assumption \ref{ass:cont_stagecost}, definition of $J(x(k),u_{\cdot|k})$, and with recursive feasibility, it holds for all $k \in \mathbb{I}_{\geq 0}$ that 
    \begin{equation}\label{eq:lower_bound}
        V(x(k),k) \geq \alpha_{\ell}(\lVert x(k) \rVert) =: \alpha_1(\lVert x(k) \rVert).
    \end{equation}
     Furthermore, from Assumptions \ref{ass:cont_stagecost}-\ref{ass:evol_stab_bound} and recursive feasibility, for all $k \in \mathbb{I}_{\geq 0}$, we get that
    \begin{multline}\label{eq:nom_stab_filter_proof_decr}
        V(x(k+1),k+1)-V(x(k),k) \\ \leq -\zeta_\mathrm{min} \alpha_{\ell}(\lVert x(k) \rVert) =: -\alpha_3(\lVert x(k) \rVert).
    \end{multline}
    For any $\varepsilon>0$ it follows from $\mathcal{X}$ containing a neighborhood of the origin that there exists $\eta_1$ with $0<\eta_1\leq\varepsilon$ such that $\mathcal{B}_{\eta_1}(x):=\{x \in \mathbb{R}^n \mid \lVert x \rVert \leq \eta_1 \} \subseteq \mathcal{X}$. Also, since $\alpha_1, \alpha_2 \in \mathcal{K}_\infty$, for any $\varepsilon_1$ with $0< \varepsilon_1 \leq \eta_1$, there exists a $\delta_1 \in (0, \varepsilon_1)$ such that $\alpha_2(\delta_1) < \alpha_1(\varepsilon_1)$. This is true since $\alpha_2(\delta_1) \rightarrow 0$ as $\delta_1 \rightarrow 0$. Based on this, with recursive feasibility, and with \eqref{eq:upper_bound} - \eqref{eq:lower_bound}, for any $x(0) \in \mathcal{X}$ with $\lVert x(0) \rVert < \delta_1$ and all $k \in \mathbb{I}_{\geq 0}$, we have that
    \begin{multline}
        \alpha_1(\lVert x(k) \rVert) \leq V(x(k),k) \leq \alpha_2(\lVert x(0) \rVert) \\ < \alpha_2(\delta_1) < \alpha_1(\varepsilon_1).
    \end{multline}
    Hence, for all $k \in \mathbb{I}_{\geq 0}$, it holds that $\alpha_1(\lVert x(k) \rVert) < \alpha_1(\varepsilon_1)$, which, with monotonicity of $\alpha_1$, implies that $ \lVert x(k) \rVert < \varepsilon_1$ holds for all $k \in \mathbb{I}_{\geq 0}$. For convergence, we define $\Delta V(k) = V(x(k+1),k+1) - V(x(k),k) \leq 0$. Since, with recursive feasibility and Assumption \ref{ass:evol_stab_bound}, we have that $\Delta V(k) \leq 0$ for all $k \in \mathbb{I}_{\geq 0}$, implying that $V(x(k),k)$ monotonically decreases over time. Furthermore, $J(\cdot)$ is lower bounded by zero. Hence, it follows that $\lim_{k \rightarrow \infty} V(x(k),k) = V_\mathrm{L}$ exists. Then, this implies that $\lim_{k \rightarrow \infty} \Delta V(k) = V_\mathrm{L}-V_\mathrm{L}=0$. Since, with \eqref{eq:nom_stab_filter_proof_decr}, it holds that $0 \leq \alpha_3(\lVert x_{0|k} \rVert) \leq -\Delta V(k)$, it follows that $\lim_{k \rightarrow \infty} \alpha_3(\lVert x(k) \rVert) = 0$, which implies that $\lVert x(k) \rVert \rightarrow 0$ for $k \rightarrow \infty$ and hence concludes the proof.
\end{proof}
    With the formal requirements and theoretical derivation in place, we discuss possible design choices regarding the ingredients of \eqref{eq:stability_filter} satisfying the necessary assumptions. 

    \textbf{Design of $\ell$, $V_\mathrm{f}$, and $\mathbb{X}_\mathrm{f}$:}
    We leverage standard MPC results to ensure existence of some $u^\mathrm{c}_{\cdot|k+1}$ and $\rho \in (0,1]$ satisfying Assumption \ref{ass:evol_stab_bound} for all $k \in \mathbb{I}_{\geq 0}$, where we distinguish between explicit terminal ingredients and local controllability assumptions. 
    
    \textit{With terminal ingredients}: The standard MPC design \cite{MPC_rawlings} relies on computing a terminal state feedback $\kappa: \mathbb{R}^n \rightarrow \mathbb{R}^m$, a terminal set $\mathbb{X}_{\mathrm{f}}$, and a terminal cost $V_\mathrm{f}$, such that $\kappa$ renders $\mathbb{X}_\mathrm{f}$ positive invariant and such that $V_\mathrm{f}$ admits a control Lyapunov function.  With this, at time $k$ given $x(k) \in \mathcal{X}_\mathbb{P}(J_\mathrm{B}(k))$ and an input sequence $u_{\cdot|k} \in \mathcal{U}_\mathbb{P}(x(k),J_\mathrm{B}(k))$, the candidate 
    \begin{equation}
        u_{i | k+1}^\mathrm{c}~:=~\begin{cases} 
            u_{i+1|k}, & i \in \mathbb{I}_{[0,N-2]} \\
            \kappa(x_{N|k}) & i = N-1
            \end{cases}
    \end{equation}
    implies $\rho = 1$ \cite{MPC_rawlings}.
    A special case of this would be to choose $V_\mathrm{f}(x):=0$, $\mathbb{X}_\mathrm{f}:= \{0\}$ and $\kappa(x):=0$, which may result in a smaller set of feasible states, but may be attractive for practical problems where designing more advanced terminal ingredients is difficult.
            
    \textit{Without terminal ingredients}: Existence of some $u_{\cdot | k+1}^\mathrm{c}$ satisfying Assumption \ref{ass:evol_stab_bound} with some \textit{suboptimality index} $\rho \in (0,1]$ can also be ensured by choosing $V_\mathrm{f}(x) = 0$ and $\mathbb{X}_\mathrm{f}= \mathbb{R}^n$ for some sufficiently long prediction horizon $N$. We note that for this approach a so-called local controllability condition and additional requirements on $J(\cdot)$ and $\ell(x,u)$ are needed, compare with \cite{unconstr_gruene}.

    It is important to note that for any of the described design choices, for all $k \in \mathbb{I}_{\geq 0}$ and $x(k) \in \mathcal{X}$, it holds that $V^\mathrm{MPC}(x(k)) \leq \alpha_2(\lVert x(k) \rVert)$ for some $\alpha_2 \in \mathcal{K}_\infty$ \cite[chapter 2]{MPC_rawlings}, which paves the way for Assumption \ref{ass:init_stab_bound}.

    \textbf{Design of $J_\mathrm{B}(k)$:} A simple design choice of $J_\mathrm{B}(k)$ is given by initially solving a corresponding MPC problem to obtain $V^\mathrm{MPC}(x(0))$, setting $J_\mathrm{B}(0) = \gamma V^\mathrm{MPC}(x(0))$ for some user-defined $\gamma \geq 1$. The evolution of $J_\mathrm{B}(k)$ for all $k \in \mathbb{I}_{>0}$ can then be selected as
    \begin{equation}\label{eq:exmpl_stab_bound_convergence}
        J_\mathrm{B}(k) := V(x(k-1),k-1) - \zeta(k) \ell(x(k-1),u^*_{0|k-1}).
    \end{equation}
    With this, assuming $\ell$, $V_\mathrm{f}$, and $\mathbb{X}_\mathrm{f}$ are designed, e.g., as described above, Assumption \ref{ass:init_stab_bound} is trivially satisfied. Additionally, Assumption \ref{ass:evol_stab_bound} is satisfied for any (user-defined) $\zeta(k) \in [\zeta_{\mathrm{min}},\rho]$.

\subsection{Uniform asymptotic stability}\label{sec:stability}
To further strengthen the stability properties from bounded convergence to uniform asymptotic stability, the central approach is to impose stricter requirements on the stability bound. This enables us to apply standard Lyapunov arguments on $V(x(k),k)$. In particular, instead of just requiring that $J_\mathrm{B}(0)$ is bounded with respect to the initial state $x(0)$, we now require an upper bound on $J_\mathrm{B}(k)$ that depends on the current state $x(k)$ for all $k\in \mathbb{I}_{\geq 0}$. 
\begin{ass}\label{ass:uniform_stability_bound}
     There exists an $\alpha_2\in\mathcal{K}_\infty$ such that for any $k \in \mathbb{I}_{\geq 0}$ and $x(k) \in \mathcal{X}$, the stability bound $J_\mathrm{B}(k)$ satisfies
 \begin{equation}
     V^\mathrm{MPC}(x(k)) \leq J_\mathrm{B}(k) \leq \alpha_2(\lVert x(k) \rVert).
 \end{equation}
\end{ass}

 With this assumption in place, we employ standard Lyapunov arguments using the time-varying Lyapunov function candidate $V(x(k),k)$ to show uniform asymptotic stability by extending the findings from Section \ref{sec:convergence}, compare with \cite[chapter B]{MPC_rawlings}.

\begin{theo}\label{theo:unif_stab}
    Let Assumptions \ref{ass:cont_stagecost} - \ref{ass:evol_stab_bound} and \ref{ass:uniform_stability_bound} hold and suppose that $\mathcal{X}$ contains a neighborhood of the origin. Then application of Algorithm~\ref{algo:nom_stabfilter_nom} with $\mathbb{P} := \mathbb{P}(k)$ yields uniform asymptotic stability according to Definition~\ref{defi:stability},~(2) with respect to $x^*~=~0$ for any $x(k_0) \in \mathcal{X}$ with $k_0 \in \mathbb{I}_{\geq 0}$ and all $u_\mathrm{des}(k) \in \mathbb{R}^m$ with $k \in \mathbb{I}_{\geq k_0}$.
\end{theo}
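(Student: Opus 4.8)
The plan is to reuse the Lyapunov argument underlying Theorem~\ref{theo:general_stab_filter}, upgrading it with the stronger upper bound from Assumption~\ref{ass:uniform_stability_bound}. The essential new ingredient compared to bounded convergence is that $V(x(k),k)$ is sandwiched between two time-independent $\mathcal{K}_\infty$ functions evaluated at the \emph{current} state for all $k$, rather than being anchored to the initial state; this is precisely what yields uniformity.

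First I would establish recursive feasibility from an arbitrary initial time $k_0$. The lower bound $V^\mathrm{MPC}(x(k_0)) \leq J_\mathrm{B}(k_0)$ in Assumption~\ref{ass:uniform_stability_bound} guarantees that a minimizer of the corresponding MPC problem lies in $\mathcal{U}_\mathbb{P}(x(k_0),J_\mathrm{B}(k_0))$, so $\mathbb{P}(k_0)$ is feasible. Recursive feasibility for all $k \geq k_0$ then follows verbatim from the induction step in the proof of Theorem~\ref{theo:general_stab_filter}, using the candidate solution from Assumption~\ref{ass:evol_stab_bound}. Next I would verify the three standard time-varying Lyapunov conditions for $V(x(k),k)$. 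The lower bound $\alpha_1(\lVert x(k) \rVert) \leq V(x(k),k)$ as in \eqref{eq:lower_bound} and the decrease $V(x(k+1),k+1) - V(x(k),k) \leq -\alpha_3(\lVert x(k) \rVert)$ as in \eqref{eq:nom_stab_filter_proof_decr} are inherited unchanged, since they rely only on Assumptions~\ref{ass:cont_stagecost} and~\ref{ass:evol_stab_bound}. The new upper bound follows from feasibility of the optimal solution together with Assumption~\ref{ass:uniform_stability_bound}: for every $k \geq k_0$ one has $V(x(k),k) = J(x(k),u^*_{\cdot|k}) \leq J_\mathrm{B}(k) \leq \alpha_2(\lVert x(k) \rVert)$.

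With these bounds, uniform stability reduces to a standard $\varepsilon$--$\delta$ argument. Given $\varepsilon>0$, I would choose $\delta=\delta(\varepsilon)>0$ with $\alpha_2(\delta)\leq\alpha_1(\varepsilon)$, which is possible since $\alpha_1,\alpha_2\in\mathcal{K}_\infty$; then for any $k_0$ and any $\lVert x(k_0)\rVert<\delta$, monotonic decrease of $V$ gives $\alpha_1(\lVert x(k)\rVert) \leq V(x(k),k) \leq V(x(k_0),k_0) \leq \alpha_2(\lVert x(k_0)\rVert) < \alpha_2(\delta) \leq \alpha_1(\varepsilon)$ for all $k\geq k_0$, hence $\lVert x(k)\rVert<\varepsilon$. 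Convergence follows exactly as in Theorem~\ref{theo:general_stab_filter}: the monotone, bounded-below sequence $V(x(k),k)$ converges, so $\Delta V(k)\to 0$ and thus $\alpha_3(\lVert x(k)\rVert)\to 0$, giving $\lVert x(k)\rVert\to 0$.

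The only genuinely delicate point is the uniformity, and it is where the argument differs conceptually from Theorem~\ref{theo:general_stab_filter}. I would stress that the selected $\delta$ depends only on $\varepsilon$ and not on $k_0$, which hinges entirely on the current-state upper bound in Assumption~\ref{ass:uniform_stability_bound} holding uniformly over $k$, in contrast to the initial-state bound of Assumption~\ref{ass:init_stab_bound} that only constrains the trajectory relative to $x(0)$. Everything else is a direct transcription of the bounded-convergence argument onto a time-varying Lyapunov function, which is the reason the proof can be kept short by referring back to Section~\ref{sec:convergence}.
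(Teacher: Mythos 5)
Your proposal is correct and takes essentially the same route as the paper's proof: recursive feasibility (via the lower bound in Assumption~\ref{ass:uniform_stability_bound} and the candidate from Assumption~\ref{ass:evol_stab_bound}), the time-varying Lyapunov sandwich $\alpha_1(\lVert x(k)\rVert) \leq V(x(k),k) \leq \alpha_2(\lVert x(k)\rVert)$ with decrease $-\alpha_3(\lVert x(k)\rVert)$, a $k_0$-independent $\varepsilon$--$\delta$ argument, and convergence inherited from Theorem~\ref{theo:general_stab_filter}. The only cosmetic difference is that the paper additionally restricts $\varepsilon_2 \leq \eta_2$ with $\mathcal{B}_{\eta_2} \subseteq \mathcal{X}$ so that the entire $\delta$-ball of initial states is feasible; since the theorem statement already confines $x(k_0)$ to $\mathcal{X}$, your argument is complete as written.
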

\begin{proof}
   Since satisfaction of Assumption \ref{ass:uniform_stability_bound} implies satisfaction of Assumption \ref{ass:init_stab_bound}, recursive feasibility follows from Theorem~\ref{theo:general_stab_filter}. For uniform asymptotic stability, using similar arguments as in the proof of Theorem \ref{theo:general_stab_filter}, for any $x(k) \in \mathcal{X}$ and all $k \in \mathbb{I}_{\geq k_0}$ with $k_0 \in \mathbb{I}_{\geq 0}$, one obtains
    \begin{equation}\label{eq:stability_proof_bounds}
        \alpha_1(\lVert x(k) \rVert ) \leq V(x(k),k) \leq \alpha_2(\lVert x(k) \rVert),
    \end{equation}
    \begin{equation}\label{eq:stability_proof_decr}
        V(x(k+1),k+1) - V(x(k),k) \leq - \alpha_3(\lVert x(k) \rVert).
    \end{equation}
    Since $\mathcal{X}$ contains a neighborhood of the origin, for any $\varepsilon>0$, there exists $\eta_2$ such that $\mathcal{B}_{\eta_2}(x) := \{x \in \mathbb{R}^n \mid \lVert x \rVert \leq \eta_2 \} \subseteq \mathcal{X}$. Again, since $\alpha_1,\alpha_2 \in \mathcal{K}_\infty$, for any $0 < \varepsilon_2 \leq \eta_2$ there exists $\delta_2 \in (0,\varepsilon_2)$ such that $\alpha_2(\delta_2) < \alpha_1(\varepsilon_2)$. By Assumption~\ref{ass:uniform_stability_bound}, for any $x(k_0) \in \mathcal{X}$, we have that $x(k_0) \in \mathcal{X}_\mathbb{P}(J_\mathrm{B}(k))$. Hence, for any $x(k_0) \in \mathcal{X}$ with $\lVert x(k_0) \rVert < \delta_2$ and $k_0 \in \mathbb{I}_{\geq 0}$, all $k \in \mathbb{I}_{\geq k_0}$, with Assumption \ref{ass:uniform_stability_bound}, recursive feasibility, and \eqref{eq:stability_proof_bounds} - \eqref{eq:stability_proof_decr}, we have that
    \begin{multline}
        \alpha_1(\lVert x(k) \rVert ) \leq V(x(k),k) \leq V(x(k_0),k_0) \\ \leq \alpha_2(\lVert x(k_0) \rVert ) < \alpha_2(\delta_2) < \alpha_1(\varepsilon_2).
    \end{multline}
    Thus, $\alpha_1(\lVert x(k) \rVert) < \alpha_1(\varepsilon_2)$ for all $k \in \mathbb{I}_{\geq 0}$, which, with monotonicity of $\alpha_1$, implies that $\lVert x(k) \rVert < \varepsilon_2$ for all $k~\in~\mathbb{I}_{\geq k_0}$. Finally, convergence can be concluded using the arguments of the proof of Theorem \ref{theo:general_stab_filter}, which concludes the proof. 
\end{proof}

In the following, we discuss possible design choices regarding the ingredients of \eqref{eq:stability_filter}, leveraging results from suboptimal MPC~\cite{allan} to satisfy Assumptions~\ref{ass:evol_stab_bound} and \ref{ass:uniform_stability_bound}. 

\textbf{Design of $\ell$, $V_\mathrm{f}$, $\mathbb{X}_\mathrm{f}$:}
 Similar to \cite{didier2024predictive}, the main idea is to adopt the setting considered in \cite{allan} to construct suboptimal \textit{warm start} solutions, which are used to compute the stability bound. To this end, as in \cite{allan}, we assume that $\mathbb{X}$ is closed, and require the following assumptions on the cost terms and terminal ingredients. 
\begin{ass}\label{ass:stability_ass} 
The terminal set $\mathbb{X}_\mathrm{f} \subseteq \mathbb{X}$ is closed and contains the origin in its interior. Furthermore, there exists a terminal control law $\kappa_\mathrm{f}:\mathbb{X}_\mathrm{f} \rightarrow \mathbb{U} $ such that for all $x \in \mathbb{X}_\mathrm{f}$:
\begin{itemize}
    \item [(i)] $\mathbb{X}_\mathrm{f}$ is positively invariant with respect to $x(k+1) = f(x(k),\kappa_\mathrm{f}(x(k)))$,
    \item[(ii)] $V_\mathrm{f}(f(x,\kappa_\mathrm{f}(x)))- V_\mathrm{f}(x) \leq -\ell(x,\kappa_\mathrm{f}(x))$.
\end{itemize}
\end{ass}

We denote the set of \textit{admissible warm starts}  $\Tilde{u}_{\cdot|k}$ by
\begin{multline}\label{eq:U_tilde}
     \Tilde{\mathcal{U}}(x(k)) := \{\Tilde{u}_{\cdot | k} \in \mathcal{U}(x(k)) \mid \\J(x(k),\Tilde{u}_{\cdot | k}) \leq V_\mathrm{f}(x(k)) \text{ if } x(k) \in \mathbb{X}_\mathrm{f} \},
 \end{multline}
 and all \textit{admissible state and warm start pairs} $(x(k),\Tilde{u}_{\cdot | k})$ by
 \begin{equation}\label{eq:Z_tilde}
     \Tilde{\mathcal{Z}}:= \{(x(k),\Tilde{u}_{\cdot | k}) \mid x(k) \in \mathcal{X} \text{ and } \Tilde{u}_{\cdot |k} \in \Tilde{\mathcal{U}}({x(k))} \}.
 \end{equation}

Note that if $x(k) \in \mathbb{X}_\mathrm{f}$, we can always construct an admissible warm start sequence $\Tilde{u}_{\cdot|k} \in \Tilde{\mathcal{U}}(x(k))$ using the terminal control law $\kappa_\mathrm{f}$, i.e., $\Tilde{u}^\mathrm{f}_{\cdot|k}(x(k)):= (\kappa_\mathrm{f}(x(k)), \kappa_\mathrm{f}(f(x(k),\kappa_\mathrm{f}(x(k)))), \ldots)$ belongs to $\Tilde{\mathcal{U}}(x(k))$ \cite{allan}. Together with Assumption \ref{ass:stability_ass}, this implies for all $x(k+1) = f(x(k),u_{0|k})$ with $x(k) \in \mathcal{X}_\mathbb{P}(J_\mathrm{B}(k))$ and $u_{\cdot|k} \in  \mathcal{U}_\mathbb{P}(x(k),J_\mathrm{B}(k))$ that there exists a candidate input sequence $u^\mathrm{c}_{\cdot|k+1} \in \Tilde{\mathcal{U}}(x(k+1))$ satisfying Assumption \ref{ass:evol_stab_bound} with $\rho = 1$ \cite{allan}.

\textbf{Design of $J_\mathrm{B}(k)$:}
To establish a stability bound satisfying Assumptions~\ref{ass:evol_stab_bound} and \ref{ass:uniform_stability_bound} without solving an MPC problem to determine $J_\mathrm{B}(k)$ with $V^\mathrm{MPC}(x(k))$, we further leverage the arguments from~\cite{allan}. In the following, we start with establishing Assumption \ref{ass:uniform_stability_bound}, showing that $J(x(k),\Tilde{u}_{\cdot|k}) \leq \alpha_2(\lVert x(k) \lVert)$ with $\Tilde{u}_{\cdot|k} \in \Tilde{\mathcal{U}}(x(k))$ holds for all $k \in \mathbb{I}_{\geq 0}$. With this we satisfy the upper bound in Assumption \ref{ass:uniform_stability_bound} by defining the stability bound as $J_\mathrm{B}(k) = J(x(k),\Tilde{u}_{\cdot|k})$. At the same time, with $V^\mathrm{MPC}(x(k)) \leq J(x(k),\Tilde{u}_{\cdot|k})$ holding for all $k \in \mathbb{I}_{\geq 0}$ due to suboptimality, we enable a possibly improved input matching. To this end, we first collect existing results available, e.g., in \cite{allan}, which we leverage afterwards to establish Assumption \ref{ass:uniform_stability_bound}. 
\begin{prop}\label{prop:results_subopt}
     Let Assumptions \ref{ass:cont_stagecost} and \ref{ass:stability_ass} hold. Then:
    \begin{itemize}
        \item[(1)]  For all $(x(k),\Tilde{u}_{\cdot|k}) \in \Tilde{\mathcal{Z}}$ and $k \in \mathbb{I}_{\geq 0}$, there exists a $\mathcal{K}_\infty$ function $\alpha_5$ such that $\lVert \Tilde{u}_{\cdot|k} \rVert \leq \alpha_5(\lVert x(k) \rVert)$ \cite[proposition 10]{allan}.

        \item[(2)] For all $(x(k),u_{\cdot|k}) \in \mathcal{Z}$ and $k \in \mathbb{I}_{\geq 0}$, there exists a $\mathcal{K}_\infty$ function $\alpha_4$ such that ~\cite[proposition 7]{allan}:
        \begin{equation}\label{eq:warm_start_upper_bound_eq1}
        J(x(k),u_{\cdot|k}) \leq \alpha_4(\lVert (x(k),u_{\cdot|k}) \rVert) \leq \alpha_4(\lVert (x(k)\rVert + \lVert u_{\cdot|k} \rVert).
    \end{equation}
    \end{itemize}
\end{prop}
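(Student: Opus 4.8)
The plan is to establish both bounds by combining continuity of the problem data with compactness of $\mathbb{U}$, exactly along the lines of the cited results in \cite{allan}. The recurring tool is the standard fact that a continuous function $h:\mathbb{R}^d \to \mathbb{R}_{\geq 0}$ with $h(0)=0$ admits a $\mathcal{K}_\infty$ upper bound $h(z) \leq \alpha(\lVert z \rVert)$: one sets $\sigma(r):=\max_{\lVert z \rVert \leq r} h(z)$, which is well defined, continuous and nondecreasing with $\sigma(0)=0$, and then dominates it by $\alpha(r):=\sigma(r)+r \in \mathcal{K}_\infty$.

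For claim (2), I would note that each predicted state $x_{i|k}$ is a continuous function of $(x(k),u_{\cdot|k})$ through repeated composition of the continuous dynamics $f$, so $J(x(k),u_{\cdot|k})$ is continuous on $\mathbb{R}^{n}\times\mathbb{R}^{Nm}$ by Assumption~\ref{ass:cont_stagecost}. Since $f(0,0)=0$ and $\ell(0,0)=V_\mathrm{f}(0)=0$, the zero trajectory gives $J(0,0)=0$, and applying the fact above to $h=J$ produces $\alpha_4 \in \mathcal{K}_\infty$ with $J(x(k),u_{\cdot|k}) \leq \alpha_4(\lVert (x(k),u_{\cdot|k}) \rVert)$. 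The second inequality in \eqref{eq:warm_start_upper_bound_eq1} then follows from $\lVert (x(k),u_{\cdot|k}) \rVert \leq \lVert x(k) \rVert + \lVert u_{\cdot|k} \rVert$ together with the monotonicity of $\alpha_4$.

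For claim (1), I would distinguish two regimes. Since $0 \in \mathrm{int}(\mathbb{X}_\mathrm{f})$ by Assumption~\ref{ass:stability_ass}, there is $r_0>0$ with $\lVert x(k) \rVert \geq r_0$ whenever $x(k) \notin \mathbb{X}_\mathrm{f}$; in that regime $\tilde u_{\cdot|k}\in\mathbb{U}^{N}$ and compactness of $\mathbb{U}$ give a constant bound $\lVert \tilde u_{\cdot|k} \rVert \leq M$, which is harmless for a $\mathcal{K}_\infty$ dominating function evaluated at arguments $\geq r_0$. When $x(k)\in\mathbb{X}_\mathrm{f}$, admissibility in \eqref{eq:U_tilde} enforces $J(x(k),\tilde u_{\cdot|k}) \leq V_\mathrm{f}(x(k)) \leq \bar\alpha(\lVert x(k) \rVert)$ for the $\mathcal{K}_\infty$ bound $\bar\alpha$ of the continuous $V_\mathrm{f}$, so the admissible warm-start cost vanishes as $x(k)\to 0$. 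To convert this into a bound on $\lVert \tilde u_{\cdot|k} \rVert$, I would define $\beta(r):=\sup\{\lVert \tilde u_{\cdot|k}\rVert : (x(k),\tilde u_{\cdot|k})\in\tilde{\mathcal{Z}},\ \lVert x(k)\rVert \leq r\}$, finite by compactness of $\mathbb{U}$, and show $\beta(r)\to 0$ as $r\to 0$ by a limiting argument: any sequence of admissible pairs with $\lVert x(k)\rVert \to 0$ has, along a subsequence, $\tilde u_{\cdot|k}\to \tilde u^*\in\mathbb{U}^N$, while continuity of $J,V_\mathrm{f}$ and $0\in\mathrm{int}(\mathbb{X}_\mathrm{f})$ force $J(0,\tilde u^*)\leq V_\mathrm{f}(0)=0$, i.e. the limiting warm start has zero cost and hence zero magnitude. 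Dominating the nondecreasing $\beta$ by a $\mathcal{K}_\infty$ function that also exceeds $M$ for arguments $\geq r_0$ then yields $\alpha_5$.

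The main obstacle is the final step of claim (1): turning the vanishing warm-start cost into a vanishing input magnitude near the origin. Since Assumption~\ref{ass:cont_stagecost} lower-bounds $\ell$ only in the state, this implication is not immediate and relies on the full suboptimal-MPC machinery of \cite{allan}, in particular on the property that zero stage cost at the origin forces the input to zero, which is exactly what rules out a nonzero limiting warm start $\tilde u^*$; establishing the required closedness of $\tilde{\mathcal{Z}}$ near the origin (using closedness of $\mathbb{X}$, $\mathbb{X}_\mathrm{f}$, compactness of $\mathbb{U}$, and $\tilde u^*\in\mathcal{U}(0)$) for the limiting argument is the accompanying technical point. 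Indeed, parts (1) and (2) are precisely \cite[Proposition~10]{allan} and \cite[Proposition~7]{allan}, so the argument ultimately reduces to verifying that Assumptions~\ref{ass:cont_stagecost} and \ref{ass:stability_ass} instantiate their hypotheses.
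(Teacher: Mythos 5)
The paper offers no proof of this proposition at all: both claims are imported verbatim by citation from \cite{allan}, so the only meaningful comparison is with that source. Your argument for claim (2) is correct and is essentially the proof of \cite[Proposition 7]{allan}: continuity of $J$ by composition, $J(0,0)=0$, the $\max$-over-balls construction of a $\mathcal{K}_\infty$ majorant, and monotonicity for the second inequality.

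Claim (1), however, contains a genuine gap, and it sits exactly at the step you deferred. You need ``$J(0,\Tilde{u}^*)\leq V_\mathrm{f}(0)=0$ implies $\Tilde{u}^*=0$,'' but this does not follow from Assumption~\ref{ass:cont_stagecost}, which lower-bounds $\ell$ only in the \emph{state}: zero cost forces all predicted states to vanish, not the inputs. Concretely, take $f(x,u)\equiv 0$, $\ell(x,u)=\lVert x\rVert^2$, $V_\mathrm{f}(x)=\lVert x\rVert^2$, $\mathbb{X}=\mathbb{R}^n$, $\mathbb{X}_\mathrm{f}=\{x\mid\lVert x\rVert\leq 1\}$, $\kappa_\mathrm{f}\equiv 0$, $\mathbb{U}=[-1,1]^m$. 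Assumptions~\ref{ass:cont_stagecost} and \ref{ass:stability_ass} hold, yet for $x(k)=0$ every constant sequence $\Tilde{u}_{\cdot|k}=(\bar{u},\ldots,\bar{u})$ with $\bar{u}\neq 0$ keeps the state at the origin, satisfies $J(0,\Tilde{u}_{\cdot|k})=0\leq V_\mathrm{f}(0)$ and the constraints in \eqref{eq:U_tilde}, and is therefore an admissible warm start of nonzero norm; no $\alpha_5\in\mathcal{K}_\infty$ with $\lVert\Tilde{u}_{\cdot|k}\rVert\leq\alpha_5(\lVert x(k)\rVert)$ can exist. The reason \cite[Proposition 10]{allan} holds in its original setting is that there the stage cost is assumed positive definite in the state--input pair, $\ell(x,u)\geq\alpha_\ell(\lVert(x,u)\rVert)$; under that hypothesis the proof is also more direct than your subsequence argument: for $x(k)\in\mathbb{X}_\mathrm{f}$, admissibility gives $\alpha_\ell(\lVert u_{i|k}\rVert)\leq J(x(k),\Tilde{u}_{\cdot|k})\leq V_\mathrm{f}(x(k))\leq\alpha_\mathrm{f}(\lVert x(k)\rVert)$ for each $i$, hence $\lVert\Tilde{u}_{\cdot|k}\rVert\leq N\alpha_\ell^{-1}(\alpha_\mathrm{f}(\lVert x(k)\rVert))$, while outside $\mathbb{X}_\mathrm{f}$ your compactness bound applies since $0\in\mathrm{int}(\mathbb{X}_\mathrm{f})$. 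So your closing claim that the argument ``reduces to verifying that Assumptions~\ref{ass:cont_stagecost} and \ref{ass:stability_ass} instantiate their hypotheses'' is precisely where the proof breaks down: they do not instantiate them, and claim (1) is false under the paper's literal assumptions; it becomes true only with the stronger stage-cost bound tacitly inherited from \cite{allan}.
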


With these results, the desired upper bound on $J(x(k),\Tilde{u}_{\cdot|k})$ can be concluded, paving the way for Assumption \ref{ass:uniform_stability_bound}.
\begin{prop}\label{prop:unif_upper_bound}
    Let Assumptions \ref{ass:cont_stagecost} and \ref{ass:stability_ass} hold. Then, for all $(x(k),\Tilde{u}_{\cdot|k}) \in \Tilde{\mathcal{Z}}$ and all $k \in \mathbb{I}_{\geq 0}$, there exists a function $\alpha_2(\cdot) \in \mathcal{K}_\infty$ such that $J(x(k),\Tilde{u}_{\cdot|k}) \leq \alpha_2(\lVert x(k) \rVert)$.
\end{prop}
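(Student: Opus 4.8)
The plan is to obtain the claimed bound by composing the two estimates collected in Proposition~\ref{prop:results_subopt}. The key observation is that every admissible warm start is in particular feasible, since $\Tilde{\mathcal{U}}(x(k)) \subseteq \mathcal{U}(x(k))$ by \eqref{eq:U_tilde}, so that $(x(k),\Tilde{u}_{\cdot|k}) \in \mathcal{Z}$ for every $(x(k),\Tilde{u}_{\cdot|k}) \in \Tilde{\mathcal{Z}}$. This is precisely what lets the bound from Proposition~\ref{prop:results_subopt}(2), which is stated for feasible state-input pairs, be applied directly to warm starts.

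First, I would fix an arbitrary pair $(x(k),\Tilde{u}_{\cdot|k}) \in \Tilde{\mathcal{Z}}$ and $k \in \mathbb{I}_{\geq 0}$. Applying Proposition~\ref{prop:results_subopt}(2) to this pair gives
$$J(x(k),\Tilde{u}_{\cdot|k}) \leq \alpha_4(\lVert x(k) \rVert + \lVert \Tilde{u}_{\cdot|k} \rVert).$$
Next, since the pair is an admissible state and warm start pair, Proposition~\ref{prop:results_subopt}(1) bounds the warm start via $\lVert \Tilde{u}_{\cdot|k} \rVert \leq \alpha_5(\lVert x(k) \rVert)$. Substituting this into the previous inequality and using that $\alpha_4$ is nondecreasing yields
$$J(x(k),\Tilde{u}_{\cdot|k}) \leq \alpha_4\bigl(\lVert x(k) \rVert + \alpha_5(\lVert x(k) \rVert)\bigr).$$

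Finally, I would set $\alpha_2(s) := \alpha_4(s + \alpha_5(s))$ and verify that $\alpha_2 \in \mathcal{K}_\infty$: the inner map $s \mapsto s + \alpha_5(s)$ is the sum of the identity and the $\mathcal{K}_\infty$ function $\alpha_5$, hence itself $\mathcal{K}_\infty$, and the composition of two $\mathcal{K}_\infty$ functions is again $\mathcal{K}_\infty$. Because the resulting $\alpha_2$ does not depend on $k$, the bound holds uniformly over all $k \in \mathbb{I}_{\geq 0}$, as required. I do not anticipate a genuine obstacle here; the only care needed is the elementary bookkeeping ensuring that sums and compositions preserve the class-$\mathcal{K}_\infty$ property, which is standard.
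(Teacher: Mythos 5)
Your proof is correct and follows essentially the same route as the paper's: both use $\Tilde{\mathcal{Z}} \subseteq \mathcal{Z}$ to apply Proposition~\ref{prop:results_subopt}(2), then substitute the bound $\lVert \Tilde{u}_{\cdot|k} \rVert \leq \alpha_5(\lVert x(k) \rVert)$ from Proposition~\ref{prop:results_subopt}(1) and define $\alpha_2(s) := \alpha_4(s + \alpha_5(s))$. Your additional verification that $\alpha_2 \in \mathcal{K}_\infty$ (sum and composition of $\mathcal{K}_\infty$ functions) is a detail the paper leaves implicit, but it is the same argument.
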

\begin{proof}
    With Proposition \ref{prop:results_subopt} and since $\Tilde{\mathcal{Z}} \subseteq \mathcal{Z}$ by definition, for all $(x(k),\Tilde{u}_{\cdot|k}) \in \Tilde{\mathcal{Z}}$ and $k \in \mathbb{I}_{\geq 0}$, this implies
    \begin{multline}
            J(x(k),\Tilde{u}_{\cdot|k}) \leq \alpha_4(\lVert x(k) \rVert + \lVert \Tilde{u}_{\cdot|k} \rVert) \\ \leq   \alpha_4(\lVert x(k) \rVert + \alpha_5(\lVert x(k)\rVert)) =: \alpha_2(\lVert x(k)\rVert),
    \end{multline}
    with $\alpha_2 \in \mathcal{K}_\infty$, which concludes the proof.
\end{proof}

Satisfaction of Assumption \ref{ass:uniform_stability_bound} is achieved by choosing $J_\mathrm{B}(k) = J(x(k),\Tilde{u}_{\cdot|k})$ with $\Tilde{u}_{\cdot|k} \in \Tilde{\mathcal{U}}(x(k))$. To satisfy Assumption~\ref{ass:evol_stab_bound}, we additionally require the stability bound to enforce the necessary decrease of the closed-loop stability cost $V(x(k),k)$. Based on this, we denote the set of \textit{stabilizing successor warm start sequences} by 
 \begin{multline}
     \hat{\mathcal{U}}(x(k),u^*_{\cdot|k})  := \{\Tilde{u}_{\cdot|k+1} \in \Tilde{\mathcal{U}}(x(k+1)) \mid J(x(k+1), \Tilde{u}_{\cdot|k+1}) \\ \leq V(x(k),k)-\ell(x(k),u^*_{0|k}) \}.
 \end{multline}

With this in place, we define $J_\mathrm{B}(k)$ at time $k$ by 
\begin{equation}\label{eq:stab_bound_subopt}
J_\mathrm{B}(k) := J(x(k),\Tilde{u}_{\cdot|k}),     
\end{equation}
requiring $\Tilde{u}_{\cdot|k} \in \Tilde{\mathcal{U}}(x(k))$ for all $k \in \mathbb{I}_{\geq 0}$ and $\Tilde{u}_{\cdot|k} \in \hat{\mathcal{U}}(x(k-1),u^*_{\cdot|k-1})$ for all $k \in \mathbb{I}_{>0}$.
As a result, Assumptions~\ref{ass:evol_stab_bound} and \ref{ass:uniform_stability_bound} are satisfied, where Assumption \ref{ass:evol_stab_bound} holds with $\rho = 1$ for any $\zeta(k+1) \in [\zeta_{\mathrm{min}},1]$. Additionally, by noting that Assumption \eqref{ass:stability_ass} requires $V_\mathrm{f}$ to be a \textit{local} CLF on $\mathbb{X}_\mathrm{f}$, a conceptual connection to CBF approaches~\cite{CBF_Agrawal} arises. For the special case $N=1$, \eqref{eq:stability_constraint} embodies a classical CLF constraint where the size of $\mathcal{X}_\mathbb{P}$ is strongly determined through $\mathbb{X}_\mathrm{f}$. For a prediction horizon $N>1$, $V(x(k),k)$ is implicitly defined through the optimal solution to \eqref{eq:stability_filter} and \eqref{eq:stability_constraint} can be interpreted as an implicit CLF constraint.

 \begin{rem}
    The proposed framework can be enhanced by robust techniques such as tube-based MPC, compare, e.g., with \cite[chapter 3]{MPC_rawlings}. For example, a robust tube can be built around the nominal (uniformly asymptotically) stable trajectory resulting from the application of Algorithm \ref{algo:nom_stabfilter_nom}. With this, robust (uniform asymptotic) stability as well as robust constraint satisfaction can be guaranteed, applying standard constraint tightening techniques \cite[chapter 3]{MPC_rawlings}.
\end{rem}

\subsection{Performance}
In addition to safety and the desired stability specification, the framework provides the following implicit closed-loop performance bound. For all $T \in \mathbb{I}_{> 0}$ it holds that
\begin{equation}\label{eq:perfo_bound_convergence}
    V(x(T),T) - V(x(0),0) \leq - \sum_{i=0}^{T-1} \zeta(i+1)\ell(x(i),u_{0|i}^*),
\end{equation}
which can be shown by substituting \eqref{eq:evolu_stab_bound} into \eqref{eq:stability_constraint} and summing up from $k=0$ to $k=T$. The fact that we only require $\zeta(k)$ to satisfy $\zeta(k) \in [\zeta_\mathrm{min},\rho]$, but not require it to be constant over time, enables us to mediate improved input matching and performance on-the-fly depending on the environment or specific needs. This can be achieved by choosing $\zeta(k) = \zeta_{\mathrm{min}}$, or $\zeta(k) = \rho$, respectively. This could be especially helpful when injecting, e.g., learning-based or randomized inputs to systems for which it is important to maintain a certain performance level. It is also worth mentioning that this result holds for general, non-stabilizing stability costs. 

\section{Extension to Trajectory Stabilization}\label{sec:Extensions}
The proposed framework can be extended to the stabilization of dynamic reference signals $r(k)=(x^\mathrm{r}(k),u^\mathrm{r}(k)) \in \mathbb{R}^{n+m}$ using results presented in \cite{Koehler_reference_gen}. Such references are, e.g., provided by a higher-level planner in case of autonomous driving, or pre-computed trajectories connecting different setpoints as commonly employed in process industry. In the following, we assume that the reference $r(k)$ satisfies $r(k) \in Z^\mathrm{r}$ with $ Z^\mathrm{r}\subseteq \mathrm{int}(\mathbb{X}) \times \mathrm{int}(\mathbb{U})$ and $r(k+1) \in E(r(k))$ with $E(r(k)):=\{r(k+1) \in Z^\mathrm{r} \mid x^\mathrm{r}(k+1)=f(x^\mathrm{r}(k),u^\mathrm{r}(k))\}$ for all $k \in \mathbb{I}_{\geq 0}$. With this, we ensure that the reference is feasible and reachable. 

The goal is that system~\eqref{eq:nonlinear_system} stabilizes $r(k)$ with minimal input modification. More precisely, we aim at filtering desired inputs regarding safety and stability, considering now stability with respect to the tracking error $e^\mathrm{r} = x-x^\mathrm{r}=0$ in terms of the following generalization of Definition \ref{defi:stability},~(1).
\begin{defi}\label{defi:stability_tracking}
     Consider system \eqref{eq:nonlinear_system} subject to a time-varying feedback law $u(k) = \mu(x(k),k)$ and let $r(k)=(x^\mathrm{r}(k),u^\mathrm{r}(k)) \in Z^\mathrm{r}$ be a reference signal for system \eqref{eq:nonlinear_system}. We call the resulting closed-loop system \textit{bounded and converging} to the tracking error $e^\mathrm{r}=x-x^\mathrm{r}=0$ if, for each $\varepsilon > 0$, there exists a $\delta = \delta(\varepsilon)>0$ such that $\lVert e^\mathrm{r}(0) \rVert < \delta \Rightarrow \lVert e^\mathrm{r}(k) \rVert < \varepsilon$ for all $k \in \mathbb{I}_{\geq 0}$, and $ e^\mathrm{r}(k) \rightarrow 0$ as $k \rightarrow \infty$.
\end{defi}

Let the reference over the prediction horizon $N$ be denoted by $r_{\cdot|k} = \{r_{i|k}\}_{i=0}^{N}$ with $r_{i|k} = r(i+k), i \in \mathbb{I}_{[0,N]}$. Similar to Section \ref{sec:MPSF}, the idea is to constrain the tracking stability cost
\begin{equation}\label{eq:stab_cost_traj_track}
        J^\mathrm{r}(x(k),u_{\cdot|k},r_{\cdot|k}) = \sum_{i=0}^{N-1}\ell^\mathrm{r}(x_{i|k},u_{i|k},r_{i|k}) + V^\mathrm{r}_\mathrm{f}(x_{N|k},x^\mathrm{r}_{N|k}),
\end{equation}
consisting of a tracking stage and terminal cost $\ell^\mathrm{r}$ and $V_\mathrm{f}^\mathrm{r}$, with a tracking stability bound $J^\mathrm{r}_{\mathrm{B}}(k)$ to enforce $J^\mathrm{r}(x(k),u_{\cdot|k},r_{\cdot|k}) \leq J^\mathrm{r}_{\mathrm{B}}(k)$. 
\begin{ass}\label{ass:tracking_cost_terms}
    The tracking stage and terminal cost $\ell^\mathrm{r}:\mathbb{R}^n \times \mathbb{R}^m \times \mathbb{R}^{n+m} \rightarrow \mathbb{R}_{\geq 0}$ and $V^\mathrm{r}_\mathrm{f}:\mathbb{R}^n \times \mathbb{R}^n \rightarrow \mathbb{R}_{\geq 0}$ are continuous and satisfy $\ell^\mathrm{r}(x,u,(x,u))=V_\mathrm{f}^\mathrm{r}(x,x) = 0$. Furthermore, there exists an $\alpha_\ell \in \mathcal{K}_\infty$ such that $\ell^\mathrm{r}(x,u,(x^\mathrm{r},u^\mathrm{r})) \geq \alpha_\ell(\lVert x - x^\mathrm{r} \rVert)$ for all $(x,u) \in \mathbb{X} \times \mathbb{U}$ and $r \in Z^\mathrm{r}$.
\end{ass} 

We construct the \textit{stability-enhanced tracking safety filter problem} $\mathbb{P}^\mathrm{r}(k)$ similar to problem \eqref{eq:stability_filter} by
 \begin{subequations}\label{eq:stability_filter_tracking}
\begin{align}
    \min_{u_{\cdot|k}} \ & \eqref{eq:safe_filter_cost}\\
    \text{s.t. } \  & \eqref{eq:safety_filter_init_constr} - \eqref{eq:safety_filter_input_constr} \label{eq:tracking_stab_filter_safety_filter_parts}\\
 & x_{N|k} \in \mathbb{X}^\mathrm{r}_\mathrm{f}(r_{N|k}),\label{eq:changing_setp_term_constr_}\\ 
 & J^\mathrm{r}(x(k),u_{\cdot|k},r_{\cdot|k}) \leq J^\mathrm{r}_{\mathrm{B}}(k)\label{eq:changing_setp_stability_constr},
\end{align}
\end{subequations}
substituting \eqref{eq:safety_filter_term_constr} and \eqref{eq:stability_constraint} with \eqref{eq:changing_setp_term_constr_} and \eqref{eq:changing_setp_stability_constr}, respectively.
We define the set of feasible input sequences $u_{\cdot|k}$ for \eqref{eq:stability_filter_tracking} by
\begin{equation}
    \mathcal{U}_{\mathbb{P}^\mathrm{r}}(x(k),J^\mathrm{r}_\mathrm{B}(k),r_{\cdot|k}):=\{u_{\cdot|k} \mid \eqref{eq:tracking_stab_filter_safety_filter_parts} -\eqref{eq:changing_setp_stability_constr} \},
\end{equation} and the set of feasible states $x(k)$ by
\begin{equation}\label{eq:stability_filter_feasible_set}
    \mathcal{X}_{\mathbb{P}^\mathrm{r}}(J_\mathrm{B}^\mathrm{r}(k),r_{\cdot|k}):=\{x(k) \mid \exists u_{\cdot|k} \in \mathcal{U}_{\mathbb{P}^\mathrm{r}}(x(k),J^\mathrm{r}_\mathrm{B}(k),r_{\cdot|k})\}.
\end{equation}
 Furthermore, we define 
\begin{equation}
    \mathcal{U}^\mathrm{r}(x(k),r_{\cdot|k}):= \{ u_{\cdot|k} \mid \eqref{eq:safety_filter_init_constr} - \eqref{eq:safety_filter_input_constr}, \eqref{eq:changing_setp_term_constr_} \}
\end{equation}
and 
\begin{equation}
    \mathcal{X}^\mathrm{r}(r_{\cdot|k}):= \{ x(k) \mid \exists u_{\cdot|k} \in  \mathcal{U}^\mathrm{r}(x(k),k) \},
\end{equation}
which can be considered as the tracking equivalents of $\mathcal{U}(x(k))$ and $\mathcal{X}$, respectively.
The solution to \eqref{eq:stability_filter_tracking} is the optimal input sequence $u^{\mathrm{r},*}_{\cdot|k}$ and the closed-loop tracking stability cost 
\begin{equation}
    V^\mathrm{r}(x(k),k):= J^\mathrm{r}(x(k),u^{\mathrm{r},*}_{\cdot|k},r_{\cdot|k}).
\end{equation}
Similar as in the previous sections, the input applied to system \eqref{eq:nonlinear_system} is defined by the first element of the optimal solution, i.e., $u(k):=u^{\mathrm{r},*}_{0|k}$.

 In the following, we establish the desired safety and stability guarantees by adapting Assumptions \ref{ass:evol_stab_bound}-\ref{ass:init_stab_bound} to the tracking case as follows.

\begin{ass}\label{ass:evol_stab_bound_tracking}
 Consider $\zeta_{\mathrm{min}} >0$ and let $\rho \in [\zeta_{\mathrm{min}},1]$. We assume that, for all $k \in \mathbb{I}_{\geq 0}$, $x(k) \in \mathcal{X}_{\mathbb{P}^\mathrm{r}}(J_{\mathrm{B}}^\mathrm{r}(k),r_{\cdot|k})$, and any $u^{\mathrm{r},*}_{\cdot|k} \in \mathcal{U}_{\mathbb{P}^\mathrm{r}}(x(k),J_\mathrm{B}^\mathrm{r}(k),r_{\cdot|k})$, there exists a candidate input sequence $u^\mathrm{c}_{\cdot|k+1} \in \mathcal{U}^\mathrm{r}(x(k),r_{\cdot|k})$ and $\zeta(k+1) \in [\zeta_{\mathrm{min}},\rho]$ such that 
\begin{multline}\label{eq:nom_stabfilter_evolu_stab_bound}
  J^\mathrm{r}(x(k+1),u^\mathrm{c}_{\cdot|k+1},r_{\cdot|k+1}) \leq V^\mathrm{r}(x(k),k)-\rho \ell(x(k),u^{\mathrm{r},*}_{0|k},r_{0|k}) \\ \leq  J^\mathrm{r}_\mathrm{B}(k+1)  \leq V^\mathrm{r}(x(k),k)  -\zeta(k+1) \ell^\mathrm{r}(x(k),u^{\mathrm{r},*}_{0|k},r_{0|k}).
\end{multline}
\end{ass}
We again define a lower bound using the optimal value function of the \textit{corresponding} (tracking) MPC problem, which is given by
\begin{equation}
    V^{\mathrm{r, MPC}}(x(k),r_{\cdot|k}):= \min_{u_{\cdot|k} \in \mathcal{U}^\mathrm{r}(x(k),r_{\cdot|k})} J^\mathrm{r}(x(k),u_{\cdot|k},r_{\cdot|k}).
\end{equation}
\begin{ass}\label{ass:init_stab_bound_traj_track}
 There exists an $\alpha_2 \in \mathcal{K}_\infty$ such that for any $x(0) \in \mathcal{X}^\mathrm{r}(r_{\cdot|0})$, the initial stability bound $J^\mathrm{r}_\mathrm{B}(0)$ satisfies
 \begin{equation}
     V^\mathrm{r, MPC}(x(0),r_{\cdot|0}) \leq J^\mathrm{r}_\mathrm{B}(0) \leq \alpha_2(\lVert x(0) - x^\mathrm{r}(0) \rVert).
 \end{equation}
\end{ass}

Finally, this allows us to establish stability according to Definition \ref{defi:stability_tracking}.
\begin{theo}\label{theo:tracking_stability}
  Let Assumptions \ref{ass:tracking_cost_terms} - \ref{ass:init_stab_bound_traj_track} hold and suppose that $\mathcal{X}^\mathrm{r}(r_{\cdot|0})$ contains a neighborhood of $x^\mathrm{r}(0)$. Then application of Algorithm~\ref{algo:nom_stabfilter_nom} with $\mathbb{P}:=\mathbb{P}^\mathrm{r}(k)$ yields bounded convergence according to Definition \ref{defi:stability_tracking} with respect to the tracking error $e^\mathrm{r}=0$ for any $x(0) \in \mathcal{X}^\mathrm{r}(r_{\cdot|0})$ and all $u_\mathrm{des}(k) \in \mathbb{R}^m$ with $k \in \mathbb{I}_{\geq 0}$. 
\end{theo}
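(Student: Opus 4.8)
The plan is to reproduce the proof of Theorem~\ref{theo:general_stab_filter} structurally, replacing the state $x(k)$ by the tracking error $e^\mathrm{r}(k) = x(k) - x^\mathrm{r}(k)$ and the ingredients $J$, $V$, $\ell$ by their tracking counterparts $J^\mathrm{r}$, $V^\mathrm{r}$, $\ell^\mathrm{r}$ from \eqref{eq:stability_filter_tracking}. As before, the argument decomposes into recursive feasibility followed by bounded convergence, the latter now measured in the error coordinate.

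First I would establish recursive feasibility of \eqref{eq:stability_filter_tracking} by induction. The induction start is immediate from Assumption~\ref{ass:init_stab_bound_traj_track}: for any $x(0) \in \mathcal{X}^\mathrm{r}(r_{\cdot|0})$ the bound $V^\mathrm{r, MPC}(x(0),r_{\cdot|0}) \leq J^\mathrm{r}_\mathrm{B}(0)$ guarantees a feasible input sequence, so $x(0) \in \mathcal{X}_{\mathbb{P}^\mathrm{r}}(J^\mathrm{r}_\mathrm{B}(0),r_{\cdot|0})$. For the induction step, Assumption~\ref{ass:evol_stab_bound_tracking} supplies a candidate $u^\mathrm{c}_{\cdot|k+1}$ that respects the safety-filter constraints and, by the left part of \eqref{eq:nom_stabfilter_evolu_stab_bound}, satisfies the stability constraint $J^\mathrm{r}(x(k+1),u^\mathrm{c}_{\cdot|k+1},r_{\cdot|k+1}) \leq J^\mathrm{r}_\mathrm{B}(k+1)$; hence $u^\mathrm{c}_{\cdot|k+1} \in \mathcal{U}_{\mathbb{P}^\mathrm{r}}(x(k+1),J^\mathrm{r}_\mathrm{B}(k+1),r_{\cdot|k+1})$ and feasibility propagates.

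Next I would derive the three Lyapunov-like estimates in the error coordinate. Since the optimal solution satisfies \eqref{eq:changing_setp_stability_constr}, we have $V^\mathrm{r}(x(k+1),k+1) \leq J^\mathrm{r}_\mathrm{B}(k+1)$, which combined with the rightmost inequality in \eqref{eq:nom_stabfilter_evolu_stab_bound} shows $V^\mathrm{r}$ is nonincreasing; together with Assumption~\ref{ass:init_stab_bound_traj_track} this yields the upper bound $V^\mathrm{r}(x(k),k) \leq V^\mathrm{r}(x(0),0) \leq J^\mathrm{r}_\mathrm{B}(0) \leq \alpha_2(\lVert e^\mathrm{r}(0) \rVert)$. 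The lower bound $V^\mathrm{r}(x(k),k) \geq \alpha_\ell(\lVert e^\mathrm{r}(k) \rVert) =: \alpha_1(\lVert e^\mathrm{r}(k) \rVert)$ follows by isolating the first stage-cost term of $J^\mathrm{r}$ and invoking $\ell^\mathrm{r}(x,u,(x^\mathrm{r},u^\mathrm{r})) \geq \alpha_\ell(\lVert x - x^\mathrm{r} \rVert)$ from Assumption~\ref{ass:tracking_cost_terms}. The strict decrease $V^\mathrm{r}(x(k+1),k+1) - V^\mathrm{r}(x(k),k) \leq -\zeta_\mathrm{min}\alpha_\ell(\lVert e^\mathrm{r}(k) \rVert) =: -\alpha_3(\lVert e^\mathrm{r}(k) \rVert)$ again comes from the rightmost inequality in \eqref{eq:nom_stabfilter_evolu_stab_bound} with $\zeta(k+1) \geq \zeta_\mathrm{min}$. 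With these in place, the boundedness clause of Definition~\ref{defi:stability_tracking} follows by the same $\delta$-$\varepsilon$ argument as in Theorem~\ref{theo:general_stab_filter}, using that $\mathcal{X}^\mathrm{r}(r_{\cdot|0})$ contains a neighborhood of $x^\mathrm{r}(0)$ so that small initial errors are feasible; convergence $e^\mathrm{r}(k) \to 0$ then follows because the monotone, lower-bounded sequence $V^\mathrm{r}(x(k),k)$ converges, forcing $\alpha_3(\lVert e^\mathrm{r}(k) \rVert) \to 0$ and hence $\lVert e^\mathrm{r}(k) \rVert \to 0$.

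The main obstacle is conceptual rather than computational: because the reference $r(k)$ is time varying, both the feasible set $\mathcal{X}^\mathrm{r}(r_{\cdot|k})$ and the stage cost depend on $k$, so one must confirm that no reference-induced non-uniformity contaminates the error estimates. This is exactly what the reference-independent class-$\mathcal{K}_\infty$ lower bound $\alpha_\ell$ in Assumption~\ref{ass:tracking_cost_terms} secures, and the neighborhood hypothesis is only invoked at $k=0$ for boundedness; once these points are checked, the remaining estimates are identical to the origin-stabilization case, so only bounded convergence (and not uniform stability) is claimed here, consistent with the weaker upper bound in Assumption~\ref{ass:init_stab_bound_traj_track}.
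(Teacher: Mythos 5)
Your proposal is correct and takes essentially the same route as the paper's own proof: recursive feasibility from Assumptions~\ref{ass:evol_stab_bound_tracking} and \ref{ass:init_stab_bound_traj_track}, then the three Lyapunov-type estimates recast in the tracking-error coordinate $e^\mathrm{r}(k)=x(k)-x^\mathrm{r}(k)$, followed by the same $\delta$--$\varepsilon$ boundedness argument and monotone-convergence argument imported from Theorem~\ref{theo:general_stab_filter}. Your writeup is in fact more explicit than the paper's (which compresses these steps into citations of Theorem~\ref{theo:general_stab_filter}), and it states the decrease inequality with the correct negative sign, $V^\mathrm{r}(x(k+1),k+1)-V^\mathrm{r}(x(k),k)\leq -\alpha_3(\lVert e^\mathrm{r}(k)\rVert)$, and the correct conclusion $\lVert e^\mathrm{r}(k)\rVert \rightarrow 0$, whereas the paper's displayed versions contain a missing minus sign and a $\rightarrow\infty$ typo.
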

\begin{proof}
    Similar as before, initial feasibility as well as following recursive feasibility results from Assumptions~\ref{ass:evol_stab_bound_tracking} and \ref{ass:init_stab_bound_traj_track}. For stability in terms of bounded convergence, using similar arguments as in the proof of Theorem \ref{theo:general_stab_filter}, for any $r(0) \in Z^\mathrm{r}$, any $x(0) \in \mathcal{X}^\mathrm{r}(r_{\cdot|0})$, and all $k \in \mathbb{I}_{\geq 0}$, we have that
    \begin{equation}\label{eq:tracking_proof_lowerb}
        \alpha_1(\lVert x(k) - x^\mathrm{r}(k) \rVert)  \leq V^\mathrm{r}(x(k),k)  \leq \alpha_2(\lVert x(0) - x^\mathrm{r}(0) \rVert),
    \end{equation}
      \begin{equation}\label{eq:tracking_proof_decr}
            V^\mathrm{r}(x(k+1),k+1) - V^\mathrm{r}(x(k),k)  \\ \leq  \alpha_3(\lVert x(k) - x^\mathrm{r}(k) \rVert).  
    \end{equation}
    Based on this, using similar arguments as in the proof of Theorem \ref{theo:general_stab_filter}, \eqref{eq:tracking_proof_lowerb}-\eqref{eq:tracking_proof_decr} imply that $\lVert x(k) - x^\mathrm{r}(k) \rVert = \lVert e(k) \rVert < \varepsilon_3$ holds for all $k \in \mathbb{I}_{\geq 0}$ as well as that $\lVert e(k) \rVert \rightarrow \infty$ as $k \rightarrow \infty$.
\end{proof}
It should be noted that the tracking framework retains the performance bound 
\begin{equation}\label{eq:tracking_perfo_bound}
    V^\mathrm{r}(x(T),T)-V^\mathrm{r}(x(0),0)  \leq - \sum_{i=0}^{T-1} \zeta(i+1) \ell^\mathrm{r}(x(i),u^{\mathrm{r},*}_{0|i},r_{0|i})
\end{equation}
with the tuning parameter $\zeta(k+1)$ holding for all $T \in \mathbb{I}_{>0}$, allowing to interpolate between input matching and 'stability' during online operation. Also note that again existing MPC techniques can be used in order to design the ingredients of problem \eqref{eq:stability_filter_tracking}, compare, e.g., with \cite{Koehler_reference_gen}, or Section \ref{sec:num_example}.

\section{Numerical Example}\label{sec:num_example}
   \begin{figure*}[t!] 
      \centering
        \includegraphics[scale=1]{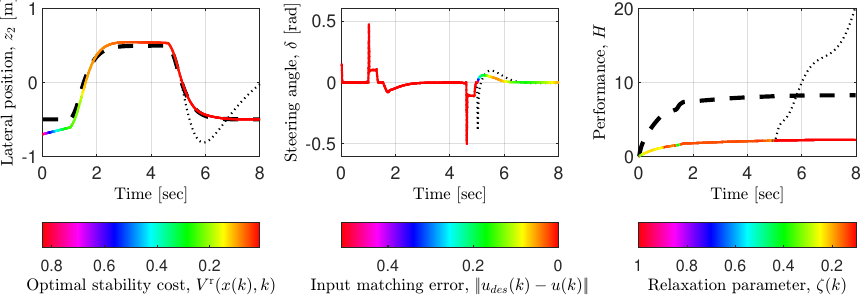}
        \caption{Numerical simulation of the stable safety filter to enhance a human driver with safety and stability guarantees during, e.g., obstacle avoidance. \textbf{Left:} Lateral position resulting from the desired input (black dotted) and the filtered closed-loop input (solid colorized), where the colors indicate the value of the current closed-loop stability cost. The reference is displayed by the black dashed line. \textbf{Center:} Desired steering angle (black dotted) and filtered closed-loop steering angle (solid colorized), where the colors indicate the value of the current input matching error. \textbf{Right:} Closed-loop performance resulting from the desired input (black dotted) and from the filtered closed-loop input (solid colorized), where the colors indicate the current value of the relaxation parameter. The stability bound $H_\mathrm{B}$ is displayed by the black dashed line.}
        \label{fig:example}
   \end{figure*}
We apply the proposed stability filter to an automotive driving scenario with two consecutive lane changes of a simulated human-operated car, illustrating, e.g., an overtaking or an obstacle avoidance maneuver.

The vehicle dynamics are described by a single-track model, see, e.g.\, \cite{CommonRoad}. The system states are $x = \begin{bsmallmatrix} z_1 & z_2 & \psi & v & \beta& \Dot{\psi} \end{bsmallmatrix}^\top \in \mathbb{R}^6$ and the inputs are $u=\begin{bsmallmatrix} \delta & a \end{bsmallmatrix}^\top \in \mathbb{R}^2$,
with the longitudinal position $z_1$, lateral position $z_2$, heading angle $\psi$, velocity $v$, side slip angle $\beta$, yaw rate $\Dot{\psi}$, steering angle $\delta$, and acceleration $a$, where all the states and inputs are considered relatively to the linearization point $(\Tilde{x},\Tilde{u}) = (\begin{bsmallmatrix} 0 & 0 & 0 & 10 & 0 & 0 \end{bsmallmatrix}^\top, \begin{bsmallmatrix}
    0 & 0
\end{bsmallmatrix}^\top)$. The linearized dynamics are discretized using a sampling time of $T_\mathrm{s}=0.02$ seconds. 
The system is subject to box constraints $x^\mathrm{l}~\leq~x~\leq x^\mathrm{u}$, $u^\mathrm{l}~\leq~u~\leq~u^\mathrm{u}$, with $x^\mathrm{l} = - x^\mathrm{u}$, $x^\mathrm{u} = [1,1,\frac{30 \pi}{180}, \frac{10}{3.6}, \frac{5 \pi}{180}, \frac{35 \pi}{180}]^\top$, $u^\mathrm{l}= [-\frac{35\pi}{180},-7]^\top$ and $u^\mathrm{u}=[\frac{35\pi}{180},2]^\top$.
We choose $Z^\mathrm{r} = \{(x^\mathrm{r},u^\mathrm{r}) \mid x^\mathrm{l}~\leq~1.05x^\mathrm{r}~\leq x^\mathrm{u}, u^\mathrm{l}~\leq~1.05u^\mathrm{r}~\leq~u^\mathrm{u} \}$. 

Despite constraint satisfaction, the goal is to achieve stability according to Definition \ref{defi:stability_tracking} with respect to a reference $r(k)$, provided, e.g., by a higher-level planner. To satisfy Assumptions~\ref{ass:tracking_cost_terms}  and \ref{ass:evol_stab_bound_tracking}, we use the design procedure outlined in \cite{Koehler_reference_gen} by choosing quadratic cost terms $\ell^\mathrm{r}(x,u,r):= \lVert x- x^\mathrm{r} \rVert_Q^2 + \lVert u - u^\mathrm{r} \rVert_R^2$, $V^\mathrm{r}_\mathrm{f}(x,x^\mathrm{r}):=\lVert x- x^\mathrm{r} \rVert_P^2$, and a terminal set of the from $\mathbb{X}_\mathrm{f}^\mathrm{r}(r) := \{x \in \mathbb{R}^n \mid V_\mathrm{f}^\mathrm{r}(x,x^\mathrm{r}) \leq \tau \}$ with $\tau > 0$. For $Q = I_6$ and $R = I_2$, we obtain $P \succ 0$ and a stabilizing state feedback matrix $K$, which allows us to construct a candidate sequence satisfying Assumption \ref{ass:evol_stab_bound_tracking} with $\rho = 1$  \cite{Koehler_reference_gen}:
    \begin{equation}\label{eq:candidate_input}
        u_{i | k+1}^\mathrm{c} = 
            \begin{cases} 
            u^{\mathrm{r},*}_{i+1 |k} & i \in \mathbb{I}_{[0,N-2]} \\
            \kappa_\mathrm{f}(x_{N|k}, r(k+N))  & i = N-1
            \end{cases},
    \end{equation}
    where $\kappa_\mathrm{f}(x_{N|k}, r(k+N)):=u^\mathrm{r}(k+N) + K(x_{N|k}-x^{\mathrm{r}}(k+N))$.
We select the planning horizon $N=30$ and start the maneuver with initial condition $x(0) = \begin{bsmallmatrix}
    0 & -0.5 & 0 & 0 &0 & 0
\end{bsmallmatrix}^\top$ with $J^\mathrm{r}_\mathrm{B}(0)=V^{\mathrm{r},\text{MPC}}(x(0),r_{.|0})$, which satisfies Assumption~\ref{ass:init_stab_bound_traj_track} for some $\alpha_2 \in \mathcal{K}_\infty$, compare with \cite{Koehler_reference_gen}.  The desired convergence is achieved using $ J^\mathrm{r}_{\mathrm{B}}(k) := V^\mathrm{r}(x(k-1),k-1) - \zeta(k) \ell^\mathrm{r}(x(k-1),u^{\mathrm{r},*}_{0|k-1},r_{0|k-1})$ for all $k \in \mathbb{I}_{>0}$, where we add $\zeta(k) \in [\zeta_{\mathrm{min}},1]$ with $\zeta_{\mathrm{min}} = 0.1$ as an optimization variable to problem \eqref{eq:stability_filter_tracking}. With the latter, we aim at improving the input matching behavior while at the same time ensuring that $H_\mathrm{B}(k) := \frac{V^\mathrm{r}(x(0),0)- V^\mathrm{r}(x(k),k)}{\zeta_{\mathrm{min}}}$ provides an upper bound on the closed-loop performance measured by $H(k) := \sum_{i=0}^{k} \ell^\mathrm{r}(x(i),u(i),r(i))$ for all $k \in \mathbb{I}_{\geq 0}$. 
   
The results are presented in Fig. \ref{fig:example}. For time $t = k\cdot T_{\mathrm{s}}\in[0,\;5]\mathrm{s}$, the filter simultaneously matches the inputs and tracks the reference according to the stability and performance bound. The relaxation parameter $\zeta(k)$ is selected to $\zeta(k) = 0.18$ in average for $t\in[0,\;5]\mathrm{s}$, allowing the desired inputs to be directly applied by the filter. Even very harsh steering angles, e.g., desired at $t= 4.6\mathrm{s}$, are passed by the filter. For $t > 5\mathrm{s}$, the desired inputs would lead to an unstable behavior which causes the stability filter to intervene in order to ensure closed-loop stability and the required performance. However, the filter still tries to be as minimally invasive as possible while ensuring all requirements are met, which is indicated by the fact that $\zeta(k)=0.1$ for $t > 5\mathrm{s}$.

\section{Conclusion}
We have extended predictive safety filters to ensure different levels of stability and linked the design procedures to well-known MPC techniques. Thus, we provide a modular framework that allows to enhance any potentially unsafe and unstable control strategy to be augmented with safety and stability guarantees. This extends the applicability of predictive safety filters to a wide range of practical problems, ranging from system identification tasks to setpoint stabilization problems and dynamic reference stabilization settings. We have demonstrated the applicability of the framework considering an advanced driver assistance setting in simulation.

\bibliographystyle{plain}
\bibliography{References}

\end{document}